\let\epsilon\varepsilon
\def\eqref#1{{\rm(\ref{#1})}}
\newtheorem{proposition}{Proposition}
\newtheorem{corollary}{Corollary}
\newtheorem{definition}{Definition}
\newtheorem{remark}{Remark}
\newtheorem{example}{Example}
\newtheorem{theorem}{Theorem}
\def\lat{\theta}
\def\lon{\phi}
\def\Involution{\mathfrak I}
\def\Scaling{\mathfrak S}
\def\xTranslation{\mathfrak T^x}
\def\yTranslation{\mathfrak T^y}
\def\scaling{\mathfrak s}
\def\xtranslation{\mathfrak t^x}
\def\ytranslation{\mathfrak t^y}
\def\dif{\mathop{}\!\mathrm d}
\begin{document}


\title{Some results concerning the constant astigmatism equation}
\author{Adam Hlav\'a\v{c} and Michal Marvan}
\address{Mathematical Institute in Opava, Silesian University in
  Opava, Na Rybn\'\i\v{c}ku 1, 746 01 Opava, Czech Republic.
  {\it E-mail}: Adam.Hlavac@math.slu.cz, Michal.Marvan@math.slu.cz}
\date{}

\ams{35Q53, 53A05, 58J72, 73E15}


\begin{abstract}
In this paper we continue investigation of the constant astigmatism equation
$z_{yy} + (1/z)_{xx} + 2 = 0$. We newly interpret its solutions as 
describing spherical orthogonal equiareal patterns, with relevance to 
two-dimensional plasticity. 
We show how the classical Bianchi superposition principle for the sine-Gordon 
equation can be extended to generate an arbitrary number of solutions of the 
constant astigmatism equation by algebraic manipulations. 
As a by-product, we show that sine-Gordon solutions give slip line fields on the 
sphere. 
Finally, we compute the solutions corresponding to classical Lipschitz 
surfaces of constant astigmatism via the corresponding equiareal patterns.
\end{abstract}

\section{Introduction}

It is well known that the classical B\"acklund 
transformation~\cite{Bae} for the sine-Gordon equation $u_{\xi\eta} = \sin u$ 
as well as the Bianchi permutability property~\cite{Bia b}
have been discovered in the context of pseudospherical surfaces, 
i.e., surfaces of constant negative Gaussian curvature.
It is perhaps less known that historical roots of these developments 
lie in another class of surfaces, characterised by the constancy of the 
difference $\rho_2 - \rho_1$ between the principal radii of curvature 
$\rho_1,\rho_2$; see~\cite{P-S} for the historical account.
Lying covered with dust and oblivion for almost a century, the surfaces 
satisfying $\rho_2 - \rho_1 = `const$ reemerged recently from the systematic 
search for integrable classes of Weingarten surfaces conducted by Baran and 
one of us~\cite{B-M I}. 
Although nameless in the nineteenth century, in~\cite{B-M I} they have been
named the {\it surfaces of constant astigmatism} in connotation with 
the astigmatic interval~\cite{Sturm} of the geometric optics, albeit without 
suggesting any specific application.

Undoubtedly, the most important results about constant astigmatism surfaces are 
due to Bianchi.
In~\cite{Bia a} (see also~\cite[\S130]{Bia I}), Bianchi observed that 
evolutes (i.e., focal surfaces) of surfaces satisfying 
$\rho_2 - \rho_1 = `const$ are pseudospherical. 
In the same paper he also constructed surfaces satisfying 
$\rho_2 - \rho_1 = `const$ as involutes corresponding to parabolic 
geodesic systems on pseudospherical surfaces.
Apparently, Bianchi was the first to obtain surfaces of constant 
astigmatism explicitly, namely, surfaces~\cite[eq.~(30)]{Bia a} 
corresponding to Dini's pseudospherical helicoids 
(see, e.g.,~\cite[\S1.4.2]{R-S} or~\cite[p.~183]{Sym}). 
Lipschitz~\cite{Lip} obtained another class of surfaces of constant astigmatism; 
within the full class given in terms of elliptic integrals he pointed out a 
subclass of surfaces of revolution, further investigated by von Lilienthal. 

Let us stress that the aforementioned constructions of Lipschitz and Bianchi 
refer to ad hoc parameterisations. 
Bianchi used the rotation angle and a parameterisation of 
the generating tractrix of the helicoid, while Lipschitz employed spherical 
coordinates on the Gaussian sphere. 
In~\cite{B-M I} we observed that under an adapted parameterisation by lines 
of curvature the constant astigmatism surfaces correspond to solutions 
of the {\it constant astigmatism equation}
$$
\numbered\label{CAE}
z_{yy} + (\frac1z)_{xx} + 2 = 0
$$
($x,y$ are natural parameters in the sense of Ganchev and Mihova~\cite{G-M}).
The geometric link to pseudospherical surfaces induces a nonlocal 
transformation to the sine-Gordon equation and vice versa. Since curvature 
coordinates on constant astigmatism surfaces correspond to parabolic geodesic 
coordinates on the pseudo\-spherical surfaces, and these are not the coordinates 
the sine-Gordon equation is referred to, the transformations  change both 
the dependent and independent variables.
Explicit formulas can be found in~\cite{B-M I}, ready to be applied to 
the sine-Gordon solutions, which are known in abundance, 
see~\cite{A-D-M,D-N,Ov,P-M} and references therein. 
However, the only explicit instance of such a relationship we were able to find 
in the literature was that of the Bianchi surfaces~\cite[eq.~(30)]{Bia a} to
the Dini helicoid; Fig.~\ref{fig:invDini} presents a plot of them as 
unparameterised surfaces.
\begin{figure}[h]
\begin{center}
\includegraphics[scale=0.25,angle=270]{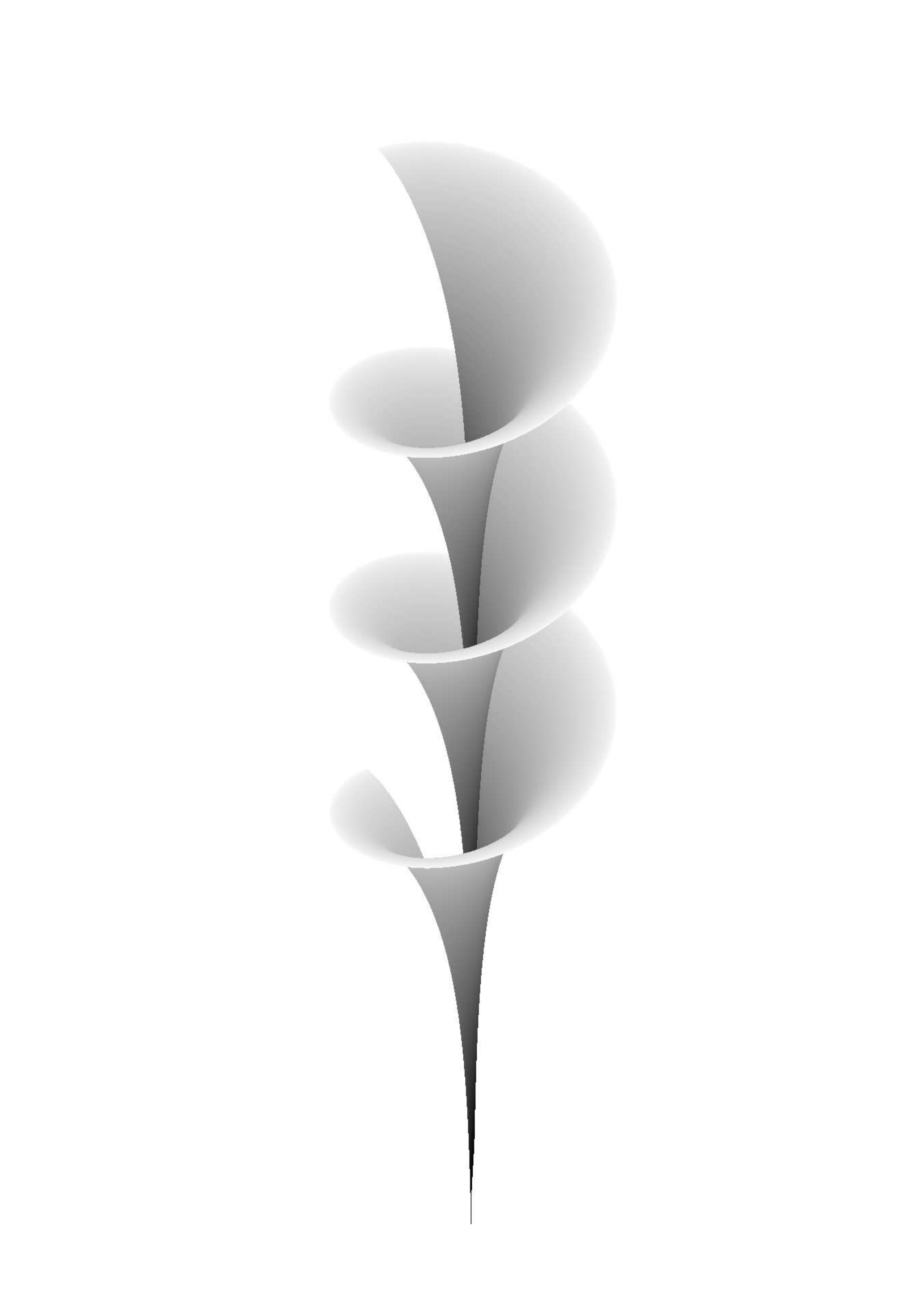}
\includegraphics[scale=0.25,angle=270]{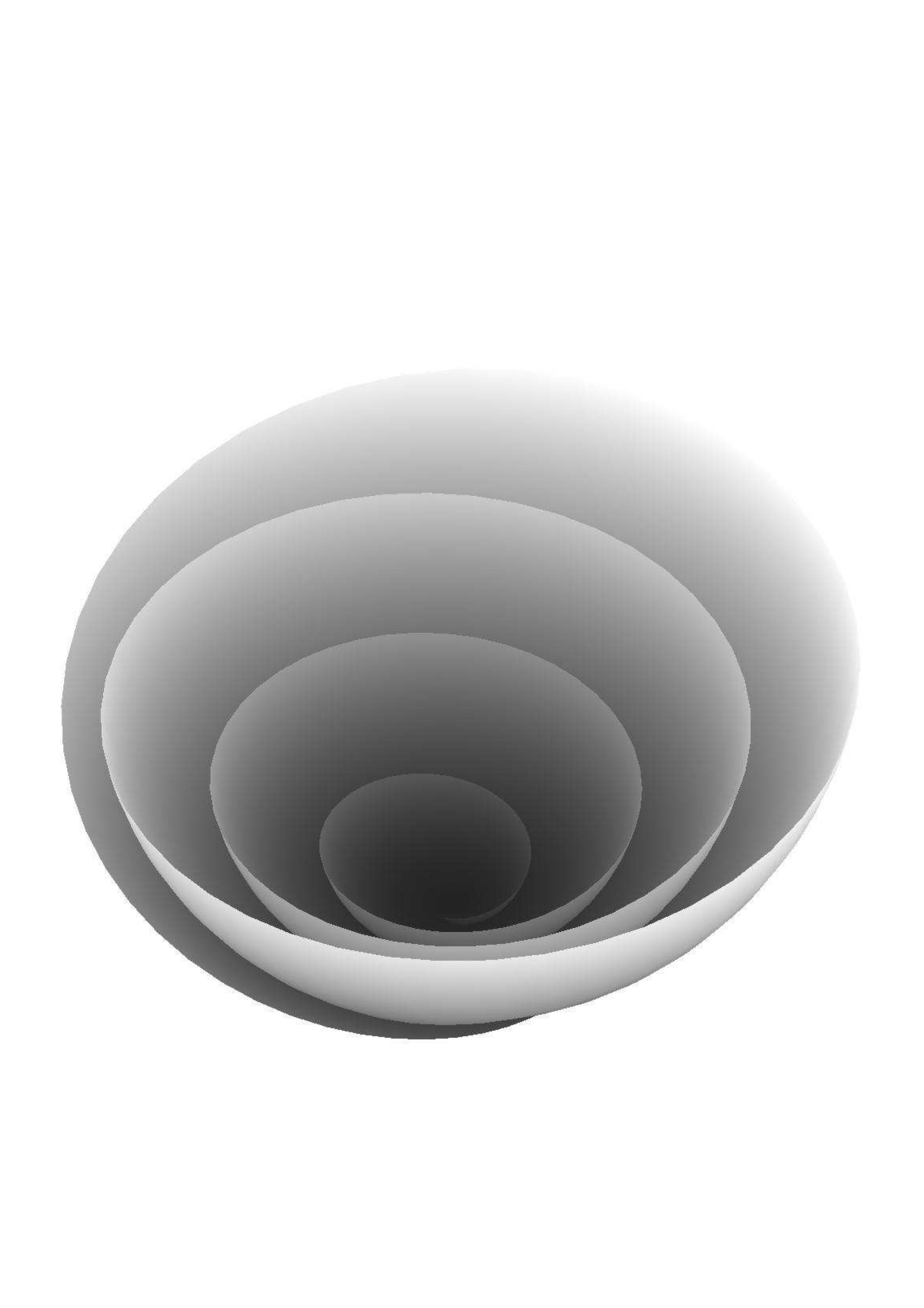}
\caption{Dini's pseudospherical surface (left) and its constant astigmatism
involute (right)}
\label{fig:invDini}
\end{center}
\end{figure}

In this paper we continue the investigation of the constant astigmatism 
equation~\eqref{CAE} and its solutions.
Firstly, we show that equation~\eqref{CAE} describes 
{\it orthogonal equiareal patterns} (Sadowsky~\cite{Sad I,Sad II}) on the sphere, 
i.e., a system of local coordinates $x^1,x^2$ such that the metric coefficients
satisfy $g_{12} = 0$, $\det g = 1$.
Hence, the area element is simply $\dif x^1 \wedge \dif x^2$ and the area 
of the curvilinear rectangle $a^i \le x^i \le b^i$, $i = 1,2$, is equal to
$(b^1 - a^1)(b^2 - a^2)$.
It follows that the curvilinear rectangles formed by ``uniformly spaced'' 
coordinate lines are of equal area, which explains the terminology.

The equiareal property in plasticity theory can be traced back to 
Boussinesq~\cite{Bou}. Seventy years later, Sadowsky~\cite{Sad I,Sad II}, 
rediscovered the ``equiareal patterns'' as configurations of the principal 
stress lines under the Tresca yield condition (see~\cite{Hill book}) and 
gave them their name.
Later Hill~\cite{Hill} gave a kinematic interpretation of these patterns.
Coburn~\cite[Thm.~1]{Cob} established the same equiareal property, this time for 
slip lines under a different yield condition. 
Ament~\cite{Ame} discovered a relation to the class of Weingarten surfaces, 
determined by the constancy of the difference between the principal curvatures 
(as opposed to the difference between the principal radii of curvature). 
Finally, Fialkow~\cite[Th.~4.1]{Fia} observed relevance of orthogonal equiareal 
patterns to conformal geometry. 

The contents of this paper are as follows. Section~2 contains the necessary background.
In Section~\ref{sect:oep}, we observe that adapted curvature coordinates on 
constant astigmatism surfaces correspond to orthogonal equiareal 
patterns on the Gaussian sphere. Inspired by the aforementioned relation to 
plasticity, we construct a two-dimensional stress tensor formally satisfying 
both the Tresca yield condition and the equilibrium equations.
Physical relevance of our purely mathematical construction is not a primary concern, 
yet the flow of a thin plastic layer around a sphere seems to be a realistic picture.
Guided by this picture we investigate the maximum shear stress directions, 
positioned at the angle of $\pi/4$ to the principal stress directions.
The corresponding trajectories are known as slip lines; we show them to 
be related to solutions of the sine-Gordon equation later in section~\ref{sect:BTSP}.

Section~\ref{sect:BTSP} is devoted to a simplified reconstruction of constant 
astigmatism surfaces from a pair of complementary pseudospherical 
surfaces~\cite{Bia a} or~\cite[\S136]{Bia I}, 
i.e., under the frequently occurring condition that both evolutes are known. 
Complementary pseudospherical surfaces are easy to find 
among those resulting from the famous and powerful Bianchi permutability 
theorem~\cite{Bia b}. 
It turns out that given a pair of complementary pseudospherical surfaces, the
corresponding (unparameterised) constant astigmatism surface can be obtained 
by pure algebraic manipulations and differentiation. 
This is also true for geodesics (as proved by Bianchi himself) and, hence, 
for one of the curvature coordinates, while obtaining the other requires one 
integration. However, owing to a suitable extension of the Bianchi superposition 
principle this integration needs to be done only once. 
We also observe (Proposition~\ref{prop:y}) that the coordinates 
$\xi,\eta$ the sine-Gordon equation is referred to correspond to slip line 
fields on the spherical image of the constant astigmatism surface.

Finally, in section~\ref{sect:Lipsch} we pay another longstanding debt and 
find the function $z(x,y)$ corresponding to the Lipschitz surfaces.
As we already mentioned, Lipschitz computed a class of constant astigmatism 
surfaces in terms of spherical coordinates on the Gaussian image. 
The result being not easily transformable to curvature coordinates, we compute 
the associated orthogonal equiareal pattern directly from the definitions
to observe that solutions of the Lipschitz class are invariant solutions with 
respect to Lie symmetries.

\section{Preliminaries}
\label{sect:prelim}

In this section we recall previous results about the constant astigmatism 
surfaces; see~\cite{B-M I} for details.
We consider surfaces immersed in Euclidean space under parameterisation by 
the lines of curvature (also known as curvature coordinates).  
Hence, the fundamental forms can be written as
$$
\begin{gathered}
\mathbf{I} = u^2 \dif x^2 + v^2 \dif y^2\,, \quad
\mathbf{II} = \frac{u^2}{\rho_1} \dif x^2 + \frac{v^2}{\rho_2} \dif y^2\,, \quad
\mathbf{III} = \frac{u^2}{\rho_1^2} \dif x^2 + \frac{v^2}{\rho_2^2} \dif y^2\,,
\end{gathered} 
$$
where $\rho_1,\rho_2$ are the principal radii of curvature.
The first two forms determine the surface up to the rigid motions (Bonnet theorem).

A surface is said to be of {\it constant astigmatism} 
if the difference $\rho_2 - \rho_1$ between the principal radii of curvature 
is a nonzero constant (if zero, then the surface is a part of the sphere).
We assume the ambient space to be scaled so that $\rho_2 - \rho_1 = \pm1$. 

\begin{definition} \rm
A parameterisation by lines of curvature is said to be {\it adapted\/} if 
$$
\numbered\label{adapted}
u v (\frac1{\rho_1} - \frac1{\rho_2}) = \pm1
$$ 
holds.
\end{definition}

This is the natural parameterisation recently introduced by Ganchev and 
Mihova~\cite[Prop.~5.6]{G-M} with the arbitrary constant being normalised to $\pm1$.
Every constant astigmatism (more generally, Weingarten) surface can be equipped 
with an adapted parameterisation by lines 
of curvature, see~\cite[Prop.~5.6]{G-M} or~\cite{B-M I}. 
Henceforth we assume that $x,y$ are adapted coordinates.
Then, according to~\cite{B-M I}, the nonzero coefficients of the three fundamental 
forms of a surface of constant astigmatism can be expressed through a 
single variable $z(x,y)$:
$$
u = \frac{z^{\frac{1}{2}}(\ln z-2)}{2} , \qquad 
v = \frac{\ln z}{2 z^{\frac{1}{2}}}, \qquad 
\rho_1 = \frac{\ln z-2}{2}, \qquad 
\rho_2 = \frac{\ln z}{2}.  
$$
Obviously, condition~\eqref{adapted} is satisfied. 

Let $\mathbf{r}(x,y)$ be the surface of constant astigmatism corresponding to 
$z(x,y)$, let $\mathbf n(x,y)$ denote the unit normal vector. 
Then $\mathbf{r}, \mathbf n$ satisfy the Gauss--Weingarten system
$$
\numbered\label{GW:CA}
\mathbf{r}_{xx} = \frac{(\ln z) z_x}{2 (\ln z - 2) z} \mathbf{r}_x
                - \frac{(\ln z - 2) z z_y}{2 \ln z} \mathbf{r}_y
                + \frac12 (\ln z - 2) z \mathbf{n},
\\
\mathbf{r}_{xy} = \frac{(\ln z) z_y}{2 (\ln z - 2) z} \mathbf{r}_x
                - \frac{(\ln z - 2) z z_x}{2 \ln z} \mathbf{r}_y,
\\
\mathbf{r}_{yy} = \frac{(\ln z) z_x}{2 (\ln z - 2) z^3} \mathbf{r}_x
                - \frac{(\ln z - 2) z_y}{2 z \ln z} \mathbf{r}_y
                + \frac{\ln z}{2 z} \mathbf{n},
\\
\mathbf{n}_x = -\frac{2}{\ln z - 2} \mathbf{r}_x, 
\qquad
\mathbf{n}_y = -\frac{2}{\ln z} \mathbf{r}_y.
$$
Note that
$\mathbf{e}_1 = \frac{\mathbf{r}_x}{u}$,
$\mathbf{e}_2 = \frac{\mathbf{r}_y}{v}$, and
$\mathbf{n} = \mathbf{e}_1 \times \mathbf{e}_2$
constitute an orthonormal frame. 

Compatibility conditions of the Gauss--Weingarten system constitute 
the Gauss--Mainardi--Codazzi system, which in our case amounts to the 
Gauss equation alone, and coincides with the 
{\it constant astigmatism equation}~\eqref{CAE}.

According to Bianchi~\cite{Bia a} (see also~\cite[\S136]{Bia II}), 
if $\mathbf r$ is a surface of constant astigmatism and $\mathbf n$ is 
its normal, then the two evolutes 
$$
\numbered\label{comp evolutes}
\mathbf r + \rho_1 \mathbf n, \quad \mathbf r + \rho_2 \mathbf n
$$
are pseudospherical surfaces. These are said to be~{\it complementary}.

For further reference, we also recall a list of symmetries of equation~\eqref{CAE}.
Lie symmetries are completely known, see~\cite{B-M I}. They are
the $x$-{\it translation} 
$\xTranslation_c (x, y, z) = (x + c, y, z)$,  
the $y$-{\it translation}
$\yTranslation_c (x, y, z) = (x, y + c, z)$, 
and the {\it scaling}
$\Scaling_c (x, y, z) = (`e^{-c} x, `e^c y, `e^{2c} z)$, where $c$ is a real parameter. 

We shall also refer to a discrete symmetry 
$\Involution(x, y, z) = (y, x, \frac 1 z)$, called the {\it involution}.
Obviously,
$$
\begin{gathered}
\Involution \circ \Involution = `Id, \\
\Involution \circ \xTranslation_a = \yTranslation_{a} \circ \Involution, &
\Involution \circ \yTranslation_a = \xTranslation_{a} \circ \Involution, \\
\Scaling_c \circ \xTranslation_a = \xTranslation_{a/c} \circ \Scaling_c, &
\Scaling_c \circ \yTranslation_b = \yTranslation_{cb} \circ \Scaling_c, \\
\Scaling_c \circ \Involution = \Involution \circ \Scaling_{1/c}.
\end{gathered}
$$
Translations are mere reparameterisations of the corresponding 
constant astigmatism surface. The scaling symmetry corresponds to an 
{\it offsetting}, i.e., takes a surface to a parallel surface 
(moves every point a unit distance along the normal). The involution 
interchanges $x$ and $y$ (swaps the orientation), followed by a unit offsetting.

\section{Orthogonal equiareal patterns and slip line fields}
\label{sect:oep}

The geometric meaning of the variable $z$ can be seen from the third
fundamental form, which turns out to be simply
$$
\mathbf{III} = z\dif x^2 + \frac 1z \dif y^2. 
$$
Since $\mathbf{III} = \dif\mathbf{n} \cdot \dif\mathbf{n}$ coincides with 
the first fundamental form of the Gaussian sphere $\mathbf{n}(x,y)$, 
it follows that one obtains a rather special parameterisation of the latter.

\begin{definition} \rm
By an {\it orthogonal equiareal pattern} on a surface $S$ we shall 
mean a parameterization $x,y$ such that the corresponding first fundamental 
form is
$$
\numbered\label{oep}
\mathbf{I}_S = z\dif x^2 + \frac 1z \dif y^2, 
$$
$z$ being an arbitrary function of $x,y$.
\end{definition}

Let $\mathbf R$ denote the position vector of a point 
on the surface~$S$. Since $\det \mathbf{I}_S = 1$, the local parameterisation 
$\mathbf{R}(x,y)$ is an area preserving map from the plane to the 
surface $S$. Moreover, the coordinate lines are, obviously, orthogonal.
These two properties imply that evenly distributed coordinate lines 
cover the surface with curvilinear rectangles of equal area (see the 
Introduction).

\begin{example} \rm {\it The Archimedean projection.} \label{ex:Arch}
A simple example of an orthogonal equiareal pattern on the
sphere that can be seen on Fig.~\ref{fig:Archimed} is delivered by the 
well-known Archimedean projection  
of the cylinder $(\cos y, \sin y, x)$ onto an inscribed sphere.
In this case, $(x,y)$ is sent to $(\sqrt{1 - x^2} \cos y, \sqrt{1 - x^2} \sin y, x)$
and we have
$$
\mathbf{I}_{`Arch} = \frac{\dif x^2}{1 - x^2} + (1 - x^2) \dif y^2, 
$$
i.e., $z = 1/(1 - x^2)$.
According to~\cite{B-M I}, this solution of the constant 
astigmatism equation corresponds to von Lilienthal surfaces~\cite{vLil}.
\end{example}

\begin{figure}[h]
\begin{center}
\includegraphics[scale=0.35]{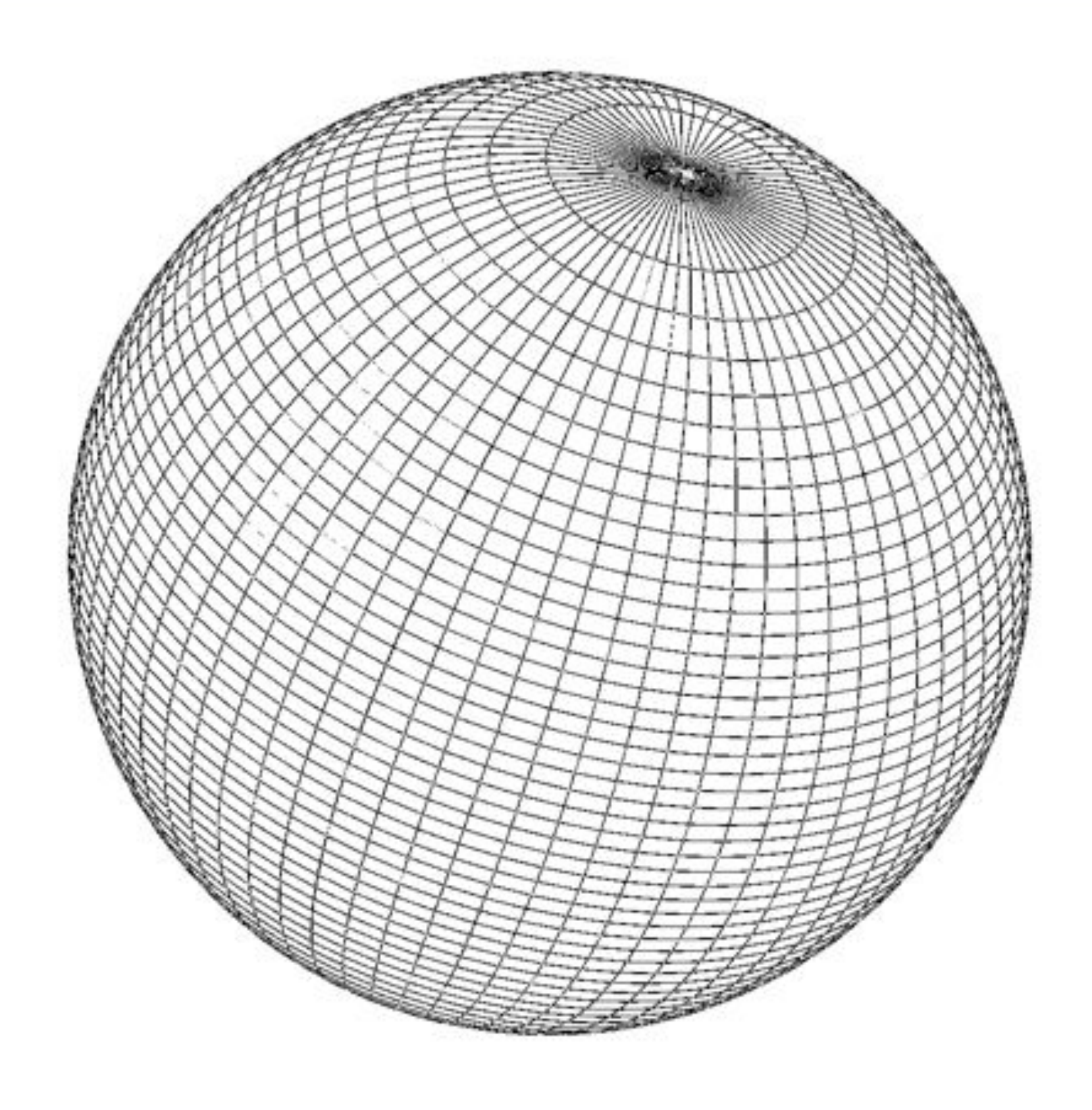}
\caption{The Archimedean equiareal parameterisation of the sphere}
\label{fig:Archimed}
\end{center}
\end{figure}

Not only every constant astigmatism surface generates an orthogonal equiareal 
parameterization of the unit sphere; a converse statement is also available.

\begin{proposition} 
Let $\mathbf n(x,y)$, $\|\mathbf n\| = 1$, be an orthogonal equiareal pattern
on the unit sphere $S$.
Then $z$ defined by formula~\eqref{oep} is a solution of the constant 
astigmatism equation~\eqref{CAE}.
\end{proposition}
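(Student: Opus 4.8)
The plan is to reverse-engineer the construction that leads from a constant astigmatism surface to its spherical Gaussian image, and to show that this reversal is always possible. Given an orthogonal equiareal pattern $\mathbf n(x,y)$ on the unit sphere with $\mathbf{I}_S = z\dif x^2 + (1/z)\dif y^2$, the natural candidate for the associated constant astigmatism surface is obtained by solving the Weingarten equations $\mathbf n_x = -\tfrac{2}{\ln z-2}\mathbf r_x$, $\mathbf n_y = -\tfrac{2}{\ln z}\mathbf r_y$ for $\mathbf r$. That is, I would \emph{define} the candidate surface by
$$
\mathbf r_x = -\frac{\ln z - 2}{2}\,\mathbf n_x, \qquad
\mathbf r_y = -\frac{\ln z}{2}\,\mathbf n_y,
$$
and first check that this overdetermined system for $\mathbf r$ is compatible, i.e. that $(\mathbf r_x)_y = (\mathbf r_y)_x$. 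Working out the cross-derivatives using only the known data — namely the Gauss--Weingarten equations for $\mathbf n$ on the sphere, which are governed by the first fundamental form $\mathbf{I}_S$ alone (the sphere has $\mathbf{II} = \mathbf{I}$) — this compatibility condition will reduce to a scalar PDE in $z$. The claim is that this scalar PDE is precisely~\eqref{CAE}.

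More concretely, the second-order derivatives $\mathbf n_{xx}, \mathbf n_{xy}, \mathbf n_{yy}$ are expressed through $\mathbf n_x, \mathbf n_y, \mathbf n$ via the Christoffel symbols of the metric $z\dif x^2 + (1/z)\dif y^2$ together with the second fundamental form of the sphere, which equals this same metric. I would compute these Christoffel symbols (they involve only $z_x/z$, $z_y/z$ and their kin), substitute into $(\mathbf r_x)_y - (\mathbf r_y)_x = 0$, and collect the coefficients of the independent vectors $\mathbf n_x$, $\mathbf n_y$, $\mathbf n$. The normal component will vanish identically (it just records that $\mathbf r$ lies off the sphere in a consistent way), while the tangential components will both collapse to a single equation. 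The key algebraic simplification to watch for is that the terms proportional to $z_{yy}$, to $(1/z)_{xx}$, and the constant $2$ assemble exactly into $z_{yy} + (1/z)_{xx} + 2 = 0$; this is the step where the specific coefficients $\tfrac{\ln z-2}{2}$ and $\tfrac{\ln z}{2}$ in the Weingarten equations are doing real work, and it is the main obstacle — not conceptually deep, but the place where a sign error or a misplaced factor of $z$ would derail everything.

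Once compatibility is established and shown to be equivalent to~\eqref{CAE}, a short bookkeeping remark finishes the proof: the $\mathbf r$ so obtained, together with the given $\mathbf n$, satisfies the full Gauss--Weingarten system~\eqref{GW:CA} (the remaining equations for $\mathbf r_{xx}$, $\mathbf r_{xy}$, $\mathbf r_{yy}$ follow by differentiating $\mathbf r_x$, $\mathbf r_y$ and using the sphere's structure equations), hence $\mathbf r$ is a genuine surface whose lines of curvature are the coordinate lines, whose principal radii are $\rho_1 = \tfrac{\ln z-2}{2}$, $\rho_2 = \tfrac{\ln z}{2}$ — so $\rho_2 - \rho_1 = 1$ — and whose Gauss map is the prescribed $\mathbf n$. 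An alternative and perhaps cleaner route, avoiding the reconstruction of $\mathbf r$ altogether, is to observe that~\eqref{CAE} is exactly the Gauss equation written in terms of $\mathbf{III}$: the Gaussian curvature of the metric $z\dif x^2+(1/z)\dif y^2$ can be computed directly from Brioschi's formula, and the statement ``this metric is the third fundamental form of some constant astigmatism surface'' is equivalent to a curvature identity that unwinds to~\eqref{CAE}. I would present the first approach as the main argument since it is the natural converse to the construction already recalled in Section~\ref{sect:prelim}, and the required computation, though tedious, is entirely mechanical given the structure equations of the sphere.
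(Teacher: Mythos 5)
Your main argument has a genuine gap: the compatibility condition $(\mathbf r_x)_y = (\mathbf r_y)_x$ for the system $\mathbf r_x = -\frac12(\ln z - 2)\,\mathbf n_x$, $\mathbf r_y = -\frac12(\ln z)\,\mathbf n_y$ is satisfied \emph{identically}, for every $z$, and therefore cannot produce equation~\eqref{CAE}. Indeed,
$$
(\mathbf r_x)_y - (\mathbf r_y)_x
 = -\frac{z_y}{2z}\,\mathbf n_x + \frac{z_x}{2z}\,\mathbf n_y
 + (\frac{\ln z}{2} - \frac{\ln z - 2}{2})\,\mathbf n_{xy}
 = -\frac{z_y}{2z}\,\mathbf n_x + \frac{z_x}{2z}\,\mathbf n_y + \mathbf n_{xy},
$$
and since $\|\mathbf n\| = 1$ and $g_{12} = \mathbf n_x\cdot\mathbf n_y = 0$, the mixed derivative $\mathbf n_{xy}$ has no normal component ($\mathbf n\cdot\mathbf n_{xy} = -\mathbf n_x\cdot\mathbf n_y = 0$) and equals $\Gamma^1_{12}\mathbf n_x + \Gamma^2_{12}\mathbf n_y = \frac{z_y}{2z}\mathbf n_x - \frac{z_x}{2z}\mathbf n_y$. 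The three terms cancel. Notice that the whole expression involves only \emph{first} derivatives of $z$, so there was never any chance of $z_{yy}$ or $(1/z)_{xx}$ assembling out of it; this identical vanishing is precisely what makes the reconstruction of $\mathbf r$ (remarked on in the paper immediately after the proposition) always possible, but it is not where the PDE lives.

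Equation~\eqref{CAE} is encoded in the hypothesis itself rather than in the reconstruction: the metric $z\dif x^2 + (1/z)\dif y^2$ is realized on the \emph{unit sphere}, hence must have Gaussian curvature $1$. By the theorema egregium --- Brioschi's formula for an orthogonal metric with $EG = 1$ --- this reads $1 = -\frac12 z_{yy} - \frac12(1/z)_{xx}$, which is~\eqref{CAE}. This is exactly the ``alternative and perhaps cleaner route'' you sketch in your last paragraph, and it is the paper's entire proof, two lines long. Your instinct about that route was correct; promote it to the main (and only) argument and drop the compatibility check, which proves nothing.
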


\begin{proof}
Using the well-known Brioschi formula to compute the Gaussian curvature of the
sphere, we obtain
$$
1 = -\frac12z_{yy} - \frac12(\frac1z)_{xx}.
$$
The constant astigmatism equation~\eqref{CAE} easily follows. 
\end{proof}

The corresponding constant astigmatism surface can be reconstructed from
the last two equations of the Gauss--Weingarten system~\eqref{GW:CA}.

Let us stress that all the point symmetries given in Sect.~\ref{sect:prelim}
can be understood as reparameterisations of the corresponding orthogonal 
equiareal pattern on the Gaussian sphere.
In particular, scaling~$\Scaling_c$ means shrinking the pattern along one 
family of lines, compensated by stretching it along the orthogonal 
family of lines.

In the case of $S$ being a plane, the notion of an orthogonal equiareal 
pattern was introduced by Sadowski~\cite{Sad I,Sad II} in the context of 
two-dimensional plasticity. Choosing the vectors $\partial_x, \partial_y$ along the 
{\it principal stress directions} (i.e., eigenvectors of the 
stress tensor $\sigma^i_j$), Sadowski derived the equiareal property from the 
equilibrium condition $`div \sigma = 0$ and the Tresca yield condition
$\sigma^1_1 - \sigma^2_2 = `const$. 


Let us reverse the line of reasoning and reconstruct
a two-dimensional stress tensor from a given orthogonal equiareal 
pattern $g = `I_S$.
In what follows, all components are taken with respect to the basis 
$\partial_x, \partial_y$ of the tangent space and
indices are raised and lowered with the metric.

\begin{proposition}
Consider an orthogonal equiareal pattern $g = g_{ij} \dif x^i \dif x^j$ such that
$$
g_{11} = z,
 \quad g_{12} = g_{21} = 0,
 \quad g_{22} =  1/z.
$$ 
Then the tensor $\sigma$ given by the components
$$
\numbered\label{stress ud}
\sigma^1_1 = \frac12 \ln z,
 \quad \sigma^1_2 = \sigma^2_1 = 0,
 \quad \sigma^2_2 = \frac12 (\ln z-2).
$$ 
satisfies
$\sigma^{ij}_{;j} = 0$ (the equilibrium equation)
and $\sigma^1_1 - \sigma^2_2 = 1$ (the Tresca yield condition). 
\end{proposition}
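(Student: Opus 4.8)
The plan is to verify the two asserted identities by direct computation. The Tresca yield condition is immediate: subtracting the diagonal components in~\eqref{stress ud} gives $\sigma^1_1 - \sigma^2_2 = \frac12\ln z - \frac12(\ln z - 2) = 1$, with no hypothesis on $z$ required. Everything therefore reduces to the equilibrium equation $\sigma^{ij}_{;j} = 0$, where the covariant derivative is that of the Levi-Civita connection of $g$.

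For this I would first record the connection of the metric $g = z\,\dif x^2 + (1/z)\,\dif y^2$, whose inverse is $g^{11} = 1/z$, $g^{22} = z$, with vanishing off-diagonal entries. The Christoffel symbols $\Gamma^k_{ij} = \frac12 g^{kl}(\partial_i g_{jl} + \partial_j g_{il} - \partial_l g_{ij})$ form a short list of expressions built from $z_x/z$, $z_y/z$, $z z_y$, and $z_x/z^3$. The essential simplification is that $\det g = 1$, so $\sqrt{\det g} = 1$ and the covariant divergence of a symmetric contravariant tensor reduces to
$$
\sigma^{ij}_{;j} = \partial_j \sigma^{ij} + \Gamma^i_{jk}\sigma^{jk}.
$$

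Next I would pass from the mixed components~\eqref{stress ud} to the contravariant ones via $\sigma^{ij} = g^{jk}\sigma^i_k$; since $g$ and $\sigma^i_j$ are both diagonal this yields $\sigma^{11} = \frac{\ln z}{2z}$, $\sigma^{22} = \frac{z(\ln z - 2)}{2}$, and $\sigma^{12} = \sigma^{21} = 0$. Substituting into the displayed formula, the $i=1$ component collapses (the mixed terms drop out because $\sigma$ is diagonal) to $\partial_x\sigma^{11} + \Gamma^1_{11}\sigma^{11} + \Gamma^1_{22}\sigma^{22}$, which factors as $\frac{z_x}{4z^2}$ times a bracket in which the $\ln z$ contributions cancel against the constants $1$ and $2$, leaving $0$; the $i=2$ component is handled identically, factoring as $\frac{z_y}{4}$ times a bracket that vanishes for the same reason.

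I expect no genuine obstacle here, only careful bookkeeping of signs and powers of $z$. The one point worth flagging is that, exactly as for the Tresca condition, the equilibrium equation holds identically for an \emph{arbitrary} function $z(x,y)$: the constant astigmatism equation~\eqref{CAE} is not used in this proposition and would enter only if one additionally demanded the pattern to lie on the unit sphere.
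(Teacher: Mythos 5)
Your proposal is correct and follows essentially the same route as the paper: compute the Christoffel symbols of $g$, raise an index to obtain $\sigma^{11}=\frac{\ln z}{2z}$, $\sigma^{22}=\frac{(\ln z-2)z}{2}$, and verify $\sigma^{ij}_{;j}=0$ by direct substitution (the paper leaves this last step as "routine", which your bracket-cancellation computation carries out, correctly using $\det g=1$ to kill the trace term $\Gamma^j_{jk}\sigma^{ik}$). Your closing observation that the constant astigmatism equation is not needed matches the paper's own remark that the proposition holds for any surface equipped with such a pattern.
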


\begin{proof}
From the metric coefficients we produce the Christoffel symbols
$$
\Gamma^1_{11} = -\Gamma^2_{12} = \frac{z_x}{2z}, \quad
\Gamma^1_{12} = -\Gamma^2_{22} = \frac{z_y}{2z}, \quad
\Gamma^1_{22} = \frac{z_x}{2z^3}, \quad
\Gamma^2_{11} = -\frac{z z_y}{2}
$$
as well as the twice contravariant tensor
$$
\numbered\label{stress uu}
 \sigma^{11} = \frac{\ln z}{2 z},
 \quad \sigma^{12} = \sigma^{21} = 0,
 \quad \sigma^{22} = \frac{\ln z - 2} {2} z.
$$ 
The yield condition $\sigma^1_1 - \sigma^2_2 = 1$ is obvious, while checking 
the equilibrium equation $\sigma^{ij}_{;j} = 0$ is a matter of routine.
\end{proof}

This proposition holds for any surface $S$ equipped with an
orthogonal equiareal pattern $g$. 
Note that if $S$ is a unit sphere, then the Tresca yield condition follows 
from the constant astigmatism property, since $\sigma^1_1 = \rho_2$ and 
$\sigma^2_2 = \rho_1$. Conversely, if $\sigma^1_1$ and $\sigma^2_2$ are 
arbitrary functions of $z$, then the equilibrium equation and 
the yield condition imply the same $\sigma^1_1$ and $\sigma^2_2$ as 
in~\eqref{stress ud} up to an additive and a multiplicative constant.


In the rest of this section we recall the derivation of the Mohr circle 
and slip lines in two-dimensional plasticity (see, e.g., \cite{Hill book}).
We consider a symmetric stress tensor $\sigma$ diagonalised along the principal stress directions 
$\partial_x, \partial_y$, i.e., 
$$
\sigma_1^1 = p, \quad \sigma_1^2 = \sigma_2^1 = 0, \quad \sigma_2^2 = q, \quad g_{12} = 0.
$$
Let $\mathbf w$ be a unit vector with components 
$(\cos\phi/\sqrt{g_{11}},\sin\phi/\sqrt{g_{22}})$ with 
respect to the basis $\partial_x,\partial_y$. 
The stress $\sigma(\mathbf w) = p \cos\phi/\sqrt{g_{11}} + q \sin\phi/\sqrt{g_{22}}$ 
can be decomposed into a sum of the normal stress
$\sigma_N(\mathbf w)$ and the shear stress $\sigma_T(\mathbf w)$, where 
$\sigma_N(\mathbf w) \mathrel{\bot} \sigma_T(\mathbf w)$ and, by definition, 
$\sigma_N(\mathbf w) = \alpha \mathbf w$ is a multiple of $\mathbf w$.
Hence,
$$
\sigma_T(\mathbf w) = \sigma(\mathbf w) - \alpha \mathbf w
 = ((p - \alpha) \cos\phi/\sqrt{g_{11}}, (q - \alpha) \sin\phi/\sqrt{g_{22}}).
$$
Obviously, $\|\sigma_N(\mathbf w)\| = |\alpha|$; likewise, we introduce
$\beta = \|\sigma_T(\mathbf w)\|$.
By orthogonality,
$$
0 = \sigma_N(\mathbf w) \cdot \sigma_T(\mathbf w)
 = \alpha (p - \alpha) \cos^2 \phi + \alpha (q - \alpha) \sin^2 \phi.
$$ 
Excluding $\cos\phi$ and $\sin\phi$ from the last equation and the condition 
$$
0 = \beta^2 - \sigma_T(\mathbf w) \cdot \sigma_T(\mathbf w)
 = (p - \alpha)^2 \cos^2 \phi + (q - \alpha)^2 \sin^2 \phi,
$$
we conclude that all admissible values of $\alpha,\beta$ belong to the 
{\it Mohr circle}~\cite{Moh} 
$$
(\alpha - \frac{p + q} 2)^2 + \beta^2 = (\frac{p - q} 2)^2.
$$
It follows that the extremal values $\alpha = p,q$ of the normal stress magnitude 
$\alpha$ are achieved when $\beta = 0$, i.e., when
$\mathbf w$ lies in one of the principal stress directions, as it should be.
The extremal values $|\frac12 (p - q)|$ of the shear stress magnitude 
$\beta$ are achieved when $\alpha = \frac12(p + q)$.
To find the corresponding vectors $\mathbf w$, we determine the acute
angle $\phi$ between $\mathbf w$ and $\pm\partial_x = (\pm 1,0)$.  
Substituting $\alpha = \frac12(p + q)$ into 
$\cos^2 \phi = (\alpha - q)/(p - q)$, 
we obtain $\cos^2 \phi = \frac12$, meaning that $\phi = \frac14\pi$.

Now, the Tresca criterion (see, e.g.,~\cite{Hill book})
says that yielding occurs whenever the maximal shear stress magnitude $\beta$ 
achieves a threshold depending on the material. It follows that the stress
tensor satisfies $p - q = `const$, which is called the Tresca yield condition.
The lines along the maximal shear stress direction are called slip lines and,
as we have already seen, have a constant deviation of $\frac14\pi$ from the
principal stress directions.

\begin{definition} \label{def:slip line} \rm
By a {\it slip line field} associated with the 
orthogonal equiareal pattern~\eqref{oep} on a surface $S$ we shall 
mean a parameterization $\xi,\eta$ such that the angle between
$\partial_x$ and $\partial_\xi$ as well as the angle between 
$\partial_y$ and $\partial_\eta$ is equal to $\frac14 \pi$.
\end{definition}

It follows that slip lines form an orthogonal net.
Note the available freedom of reparameterisation of each $\xi$ or $\eta$ separately.

\begin{remark} \rm
In plane plasticity slip line lines form what is called a 
{\it Hencky net}~\cite{Hen}. These have been fully described by
Carath\'eodory and Schmidt~\cite{C-S};
a full description of orthogonal equiareal patterns 
in the plane follows. 
However, the famous Hencky conditions satisfied by slip lines in the plane
fail on surfaces of non-vanishing curvature.
\end{remark}

\begin{example} \rm Continuing Example~\ref{ex:Arch}, we
easily see that the corresponding orthogonal net of slip lines is, by definition, 
formed by the $\pm 45^\circ$ 
loxodromes (lines of constant bearing); see Fig.~\ref{fig:ArchLoxo}
or model No. 249 
in the G\"ottingen collection~\cite{Gsammlung}.
Also compare with Zelin's superplastic sheet stretched with 
a spherical punch~\cite[Fig.~5b]{Zel}.
\end{example}

\begin{figure}
\begin{center}
\includegraphics[scale=0.34]{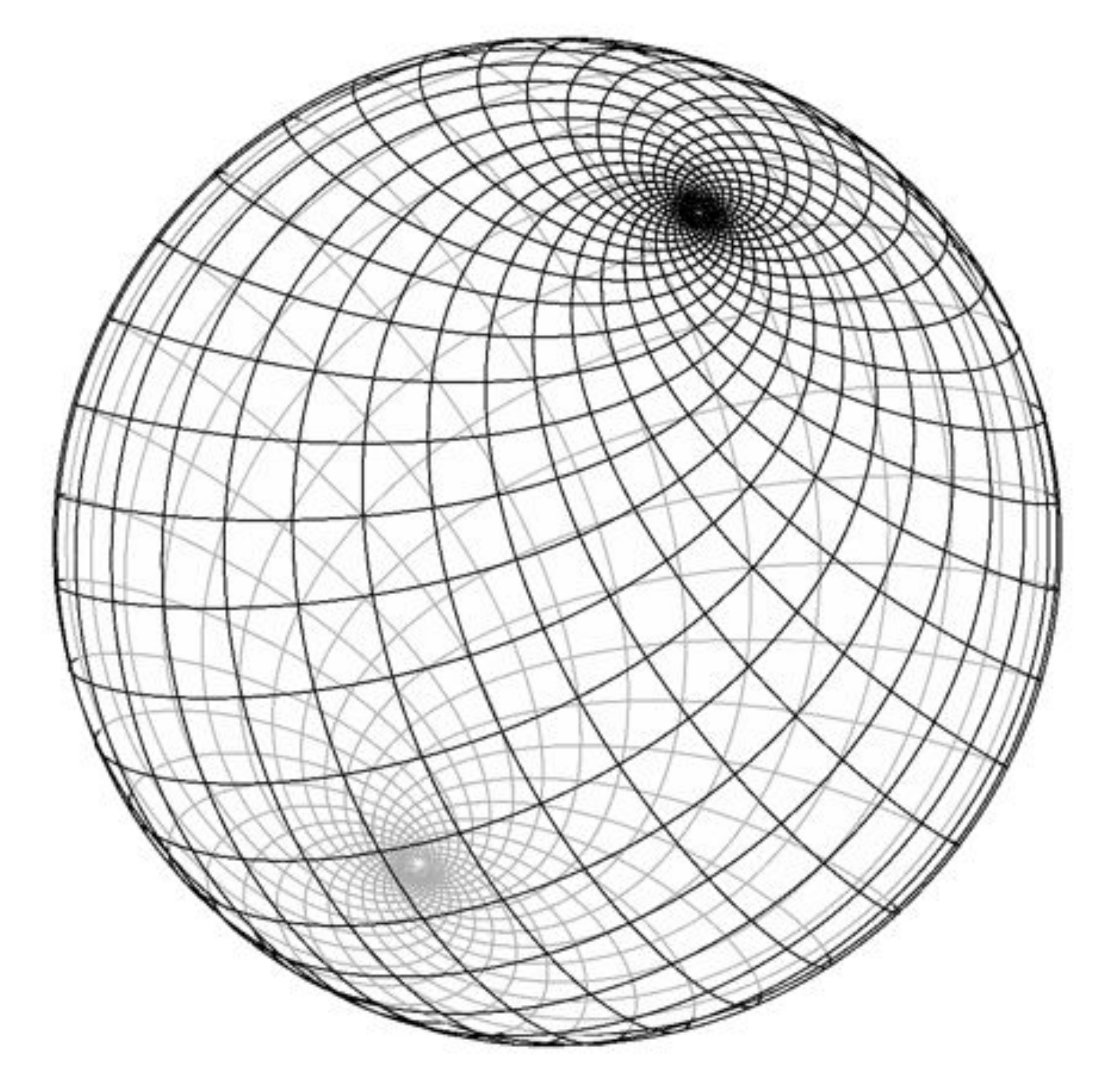}
\caption{Sphere's slip line field composed of loxodromes}
\label{fig:ArchLoxo}
\end{center}
\end{figure}

In the next section (Prop.~\ref{prop:y}) we shall see that solutions of 
the sine--Gordon equation produce slip line fields on the Gaussian sphere 
of the associated constant astigmatism surface.

\section{B\"acklund transformation and the superposition principle}
\label{sect:BTSP}

Available already at the beginning of the nineteenth century, the classical 
B\"acklund transformation~\cite{Bae} in combination with the Bianchi permutability 
theorem~\cite{Bia b} is a powerful way to generate pseudospherical 
surfaces and solutions of the sine--Gordon equation.
In this section we extend these methods to generate
constant astigmatism surfaces, solutions of the constant astigmatism equation 
as well as equiareal patterns and slip line fields on the sphere.

To start with we briefly recall the B\"acklund transformation~\cite{Bae}, see, 
e.g.,~\cite{Bia II,R-S,Sym}.
Let us consider a pseudospherical surface $\mathbf r(\xi,\eta)$, where
the parameters $\xi,\eta$ are both Chebyshev and asymptotic (which is always possible), 
i.e., 
$$
\mathbf I = \dif\xi^2 + 2\cos(2\omega)\dif\xi\dif\eta + \dif\eta^2, \quad
\mathbf{II} = 2\sin(2\omega)\dif\xi\dif\eta.
$$ 
The position vector $\mathbf r(\xi,\eta)$ and the unit normal $\mathbf n(\xi,\eta)$
satisfy the Gauss--Weingarten system
$$
\numbered\label{GW:SG}
\mathbf r_{\xi\xi} = \frac{\sin 4\omega}{\sin^2 2\omega} \omega_x \mathbf r_\xi
 - \frac{2}{\sin 2 w} \omega_x \mathbf r_\eta, \\
\mathbf r_{\xi\eta} = (\sin 2\omega) \mathbf n, \\
\mathbf r_{\eta\eta} = \frac{\sin 4\omega}{\sin^2 2\omega} \omega_y \mathbf r_\eta
 - \frac{2}{\sin 2 w} \omega_y \mathbf r_\xi, \\
\mathbf n_\xi = \frac{\sin 4\omega}{2 \sin^2 2\omega} \mathbf r_\xi
 - \frac{1}{\sin 2 w} \omega_x \mathbf r_\eta, \\
\mathbf n_\eta = \frac{\sin 4\omega}{2 \sin^2 2\omega} \mathbf r_\eta
 - \frac{1}{\sin 2 w} \omega_x \mathbf r_\xi. 
$$
The integrability conditions of the above system reduce to the sine-Gordon equation
$$
\omega_{\xi\eta} = \frac12 \sin 2\omega.
$$
The B\"acklund transform of our surface is
$$
\numbered\label{BTr}
\mathbf r^{(\lambda)} = \mathbf r
 + \frac{2\lambda}{1 + \lambda^2} (\frac{\sin(\omega - \omega^{(\lambda)})}{\sin(2\omega)} \mathbf r_\xi
 +             \frac{\sin(\omega + \omega^{(\lambda)})}{\sin(2\omega)} \mathbf r_\eta),
$$
where $\omega^{(\lambda)}$ is another sine-Gordon solution,
obtained from the pair of compatible first-order equations
$$
\numbered\label{BT}
\omega^{(\lambda)}_\xi = \omega_\xi + \lambda
  \sin(\omega^{(\lambda)} + \omega), \quad
\omega^{(\lambda)}_\eta = -\omega_\eta + \frac1\lambda
  \sin(\omega^{(\lambda)} - \omega).
$$
Here $\lambda$ is a constant called the B\"acklund parameter.

The B\"acklund transformation is particularly useful in combination with Bianchi's
permutability theorem~\cite{Bia b}; 
see also, e.g.,~\cite{Bia II,R-S}.
To simplify exposition, we shall write $\mathcal B^{(\lambda)}_c \omega$
to denote a solution $\omega^{(\lambda)}$ of system~\eqref{BT} for a specified 
value of the integration constant~$c$.
The Bianchi permutability theorem says that given a pair of B\"acklund
parameters $\lambda_1 \ne \lambda_2$, then for every choice of integration 
constants $c_1,c_2$ there is a unique choice of integration constants 
$c_1', c_2'$ such that
$$
\numbered\label{PT}
\mathcal B^{(\lambda_2)}_{c_2'} \mathcal B^{(\lambda_1)}_{c_1} \omega
 = \mathcal B^{(\lambda_1)}_{c_1'} \mathcal B^{(\lambda_2)}_{c_2} \omega
$$
and, moreover, denoting by $\omega^{(\lambda_1\lambda_2)}$ the common value 
in~\eqref{PT}, then
$\omega^{(\lambda_1\lambda_2)}$ can be obtained from the {\it superposition principle}
$$
\numbered\label{SP}
\tan\frac{\omega^{(\lambda_1\lambda_2)} - \omega}2
 = \frac{\lambda_1 + \lambda_2}{\lambda_1 - \lambda_2}
 \tan\frac{\omega^{(\lambda_1)} - \omega^{(\lambda_2)}}2
$$
without further integration, by purely algebraic manipulations.
The corresponding pseudospherical surfaces 
$\mathbf r^{(\lambda_1 \lambda_2)}$ 
can be easily obtained by iterating formula~\eqref{BTr}.

Assume now that a general solution of system~\eqref{BT} is known for every value 
of the B\"acklund parameter $\lambda$.
Substituting $\omega = \omega^{(\lambda_1)}$ into formula~\eqref{SP}, one can also 
compute the B\"acklund transform
$\omega^{(\lambda_1\lambda_2\lambda_3)}
 = \mathcal B^{(\lambda_3)} \omega^{(\lambda_1\lambda_2)}
 = \mathcal B^{(\lambda_2)} \omega^{(\lambda_1\lambda_3)}$,
by purely algebraic manipulations. 
In principle, this process may be repeated indefinitely, leading to solutions 
$\omega^{(\lambda_1\lambda_2 \dots \lambda_s)}$
depending on any 
finite number of B\"acklund parameters and integration constants, which are thereby
obtained by purely algebraic manipulations.
Needless to say, the corresponding pseudospherical surfaces 
$\mathbf r^{(\lambda_1 \lambda_2 \dots \lambda_s)}$ 
can be computed by iterating the formula~\eqref{BTr}.
Having summarized the B\"acklund transformation and the Bianchi superposition 
principle, we proceed to generation of surfaces of constant astigmatism by 
purely algebraic means.

To start with, we remind the reader that in the particular case of $\lambda = 1$ the 
B\"acklund transformation coincides with Bianchi's~\cite{Bia a} complementarity 
relation, cf.~\eqref{comp evolutes} (actually,
the B\"acklund transformation is a combination of complementarity and Lie's
transformation, and the latter is identity if $\lambda = 1$).
Otherwise said, the complementary pseudospherical surfaces result from the particular 
case $\mathcal B^{(1)}$ of the B\"acklund transformation.
Consequently, the superposition formula~\eqref{SP} yields a method to obtain abundant
pairs of complementary sine-Gordon solutions 
$\omega^{(\lambda_1\lambda_2 \dots \lambda_s)}$
and $\omega^{(\lambda_1\lambda_2 \dots \lambda_s 1)}$.
Likewise, one can also obtain abundant pairs of complementary 
pseudospherical surfaces $\mathbf r^{(\lambda_1 \lambda_2 \dots \lambda_s)}$
and $\mathbf r^{(\lambda_1 \lambda_2 \dots \lambda_s 1)}$ by using formula~\eqref{BTr}.

Let $\mathbf r$ be a pseudospherical surface, corresponding to a sine-Gordon 
solution~$\omega$.
Substituting $\lambda = 1$ into formulas~\eqref{BTr} and~\eqref{BT}, we immediately 
see that the complementary surface is
$$
\numbered\label{compl}
\mathbf r^{(1)} = \mathbf r
 + \frac{\sin(\omega - \omega^{(1)})}{\sin(2\omega)} \mathbf r_\xi
 + \frac{\sin(\omega + \omega^{(1)})}{\sin(2\omega)} \mathbf r_\eta,
$$
where $\omega^{(1)}$ is the {\it complementary solution} of the sine-Gordon equation,
satisfying the compatible first-order equations
$$
\numbered\label{BT1}
\omega^{(1)}_\xi = \omega_\xi + \sin(\omega^{(1)} + \omega), \quad
\omega^{(1)}_\eta = -\omega_\eta + \sin(\omega^{(1)} - \omega).
$$
Before proceeding further, we recall two important observations due to Bianchi~\cite{Bia a} 
(see also~\cite[\S386]{Bia II}).
Considering the dependence of $\omega^{(1)}$ on the integration constant $c$ and
denoting $f = \ln(`d\omega^{(1)}/`d c)$, differentiation of~\eqref{BT1} gives
$$
\numbered\label{BTf}
f_\xi = \cos(\omega^{(1)} + \omega), \quad
f_\eta = \cos(\omega^{(1)} - \omega).
$$
Similarly, taking one more derivative $f' = `d f/`d c$, we get
$$
\numbered\label{BTf'}
f'_\xi = -`e^f \sin(\omega^{(1)} + \omega), \quad  
f'_\eta = -`e^f \sin(\omega^{(1)} - \omega).
$$
It follows that, knowing solutions of system~\eqref{BT1}, we can also
obtain solutions of systems~\eqref{BTf} and~\eqref{BTf'} by purely algebraic 
manipulations and differentiation.

All this is important because surfaces of constant astigmatism are easier to 
obtain from a pair of complementary pseudospherical surfaces $\mathbf r$ and 
$\mathbf r^{(1)}$ than from a single pseudospherical surface (as considered
in~\cite{B-M I}).
Denote
$$
\numbered\label{nf}
\tilde{\mathbf n} = \mathbf r^{(1)} - \mathbf r
 = \frac{\sin(\omega - \tilde\omega)}{\sin(2\omega)} \mathbf r_\xi
 + \frac{\sin(\omega + \tilde\omega)}{\sin(2\omega)} \mathbf r_\eta.
$$
Then $\tilde{\mathbf n}$ 
is a unit vector tangent to both surfaces
$\mathbf r$ and $\mathbf r^{(1)}$ and determines what is called a pseudospherical
congruence. Normal surfaces of this congruence are the constant astigmatism 
surfaces sought.

\begin{proposition} \label{prop:1}
Let $\omega^{(1)}(\xi,\eta,c)$ be a general solution of system~\eqref{BT1}, where~$c$ 
is an integration constant. 
Then 
$\tilde{\mathbf r} = \mathbf r - f \tilde{\mathbf n}$, 
where $f = \ln(`d\omega^{(1)}/`d c)$ and
$\tilde{\mathbf n}$ is the unit vector given by
formula~\eqref{nf}, is a surface of constant 
astigmatism having surfaces $\mathbf r$ and $\mathbf r^{(1)}$ as evolutes.
\end{proposition}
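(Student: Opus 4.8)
The plan is to verify directly that the surface $\tilde{\mathbf r} = \mathbf r - f\tilde{\mathbf n}$ is a constant astigmatism surface by exhibiting its curvature coordinates, its principal radii of curvature, and checking that $\rho_2 - \rho_1$ is the required constant. First I would record that $\tilde{\mathbf n}$, as defined in~\eqref{nf}, is a unit vector lying in the tangent planes of both $\mathbf r$ and $\mathbf r^{(1)}$; this is a standard feature of the pseudospherical congruence generated by the B\"acklund transformation (and is essentially built into the derivation of~\eqref{BTr} with $\lambda=1$). Since $\tilde{\mathbf n}$ has constant length, $\dif\tilde{\mathbf n}$ is orthogonal to $\tilde{\mathbf n}$, so $\tilde{\mathbf n}$ is a legitimate candidate for the unit normal of $\tilde{\mathbf r}$; I would confirm $\tilde{\mathbf n}\cdot\dif\tilde{\mathbf r}=0$ using $\tilde{\mathbf r}_\ast = \mathbf r_\ast - f_\ast\tilde{\mathbf n} - f\tilde{\mathbf n}_\ast$ together with $\tilde{\mathbf n}\cdot\mathbf r_\ast$ being computable from~\eqref{nf} and the metric of $\mathbf r$, and with $f_\xi,f_\eta$ supplied by~\eqref{BTf}.

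The core computation is to differentiate $\tilde{\mathbf r}$ in $\xi$ and $\eta$ using the Gauss--Weingarten system~\eqref{GW:SG} for $\mathbf r$ and the first-order systems~\eqref{BT1}, \eqref{BTf} governing $\omega^{(1)}$ and $f$. Expanding $\tilde{\mathbf n}_\xi$ and $\tilde{\mathbf n}_\eta$ from~\eqref{nf} via~\eqref{GW:SG}, and combining with $f_\xi=\cos(\omega^{(1)}+\omega)$, $f_\eta=\cos(\omega^{(1)}-\omega)$, I expect the mixed partial $\tilde{\mathbf r}_{\xi\eta}$ to collapse into a combination of $\tilde{\mathbf r}_\xi$ and $\tilde{\mathbf r}_\eta$ with no normal component — this is the statement that $(\xi,\eta)$ are conjugate coordinates, but in fact the same computation should show that after passing to suitable new independent variables the net is orthogonal, i.e.\ $(\xi,\eta)$ become (or are related by a simple change of variables to) lines of curvature of $\tilde{\mathbf r}$. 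Concretely I would compute $\tilde{\mathbf r}_\xi\cdot\tilde{\mathbf r}_\eta$ and the two shape-operator entries $-\tilde{\mathbf n}_\xi\cdot\tilde{\mathbf r}_\eta$, $-\tilde{\mathbf n}_\eta\cdot\tilde{\mathbf r}_\xi$ and check they vanish simultaneously, which pins the curvature directions; then $\rho_1,\rho_2$ are read off as the ratios $\tilde{\mathbf r}_\xi\cdot\tilde{\mathbf r}_\xi \big/ (-\tilde{\mathbf n}_\xi\cdot\tilde{\mathbf r}_\xi)$ and similarly in $\eta$. The symmetry of~\eqref{BT1} and~\eqref{BTf} under $(\xi,\omega^{(1)})\leftrightarrow(\eta,-\cdot)$-type swaps strongly suggests that these ratios will differ by exactly $f'$ or by a constant; the target is $\rho_2-\rho_1=\pm1$.

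Finally I would identify the evolutes: by construction $\tilde{\mathbf r} + \rho_i\tilde{\mathbf n}$ should reproduce $\mathbf r$ and $\mathbf r^{(1)} = \mathbf r + \tilde{\mathbf n}$, which forces $\{\rho_1,\rho_2\}$ to be, up to the additive normalisation inherent in the position of $\tilde{\mathbf r}$ along the normal line, a pair differing by $1$ — thereby both confirming the constant astigmatism property and matching the normalisation $\rho_2-\rho_1=\pm1$ of Section~\ref{sect:prelim}. This also explains the choice $\tilde{\mathbf r} = \mathbf r - f\tilde{\mathbf n}$: the scalar $f$ is precisely what places $\tilde{\mathbf r}$ at the correct point on the common normal so that $\mathbf r$ and $\mathbf r^{(1)}$ are its two focal sheets. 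The main obstacle is the middle step: the differentiations of $\tilde{\mathbf n}$ through~\eqref{GW:SG} produce lengthy trigonometric expressions, and showing that both off-diagonal shape-operator entries vanish and that the diagonal ratios differ by the constant requires careful use of the sine-Gordon equation and of~\eqref{BTf} to cancel terms; none of it is conceptually deep, but it is where all the work lies, and organising the algebra so the cancellations are visible (e.g.\ by writing everything in terms of $\omega\pm\omega^{(1)}$) is the key to a clean proof.
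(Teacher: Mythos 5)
Your overall architecture matches the paper's: first check $\tilde{\mathbf r}_\xi\cdot\tilde{\mathbf n}=\tilde{\mathbf r}_\eta\cdot\tilde{\mathbf n}=0$ by differentiating $\tilde{\mathbf r}=\mathbf r-f\tilde{\mathbf n}$ through the Gauss--Weingarten system~\eqref{GW:SG} and using~\eqref{BT1} and~\eqref{BTf} (the paper shows these normality conditions reduce exactly to~\eqref{BTf}), and then establish the constant astigmatism property by examining the shape operator of $\tilde{\mathbf r}$ in the $(\xi,\eta)$ coordinates. However, the concrete middle step you describe would fail: $(\xi,\eta)$ are \emph{not} conjugate, orthogonal, or curvature coordinates on $\tilde{\mathbf r}$. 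From the fundamental forms computed in the proof of Proposition~\ref{prop:y}, the coefficient of $\dif\xi\,\dif\eta$ in both $\tilde{\mathbf I}$ and $\tilde{\mathbf{II}}$ is proportional to $\cos 2\omega-\cos 2\omega^{(1)}$, which is generically nonzero, and the shape operator is a full non-diagonal $2\times2$ matrix. So $\tilde{\mathbf r}_{\xi\eta}$ does have a normal component, the quantities $\tilde{\mathbf r}_\xi\cdot\tilde{\mathbf r}_\eta$, $-\tilde{\mathbf n}_\xi\cdot\tilde{\mathbf r}_\eta$, $-\tilde{\mathbf n}_\eta\cdot\tilde{\mathbf r}_\xi$ do not vanish simultaneously, and the principal radii cannot be read off as the diagonal ratios you propose. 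Geometrically this is forced: $(\xi,\eta)$ are asymptotic coordinates on the evolutes and become the slip-line field on the Gaussian sphere, sitting at $45^\circ$ to the principal directions of $\tilde{\mathbf r}$ (Proposition~\ref{prop:y} and its corollary). The paper's route is to diagonalise the full shape operator, whose eigenvalues come out as $1/f$ and $1/(f+1)$, giving $\rho_1-\rho_2=1$; the curvature coordinates $x,y$ are then obtained by integrating the eigenvector fields $\Pi_\pm$, which is the content of Proposition~\ref{prop:y}, not an automatic feature of the $(\xi,\eta)$ net.

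Your closing paragraph contains the germ of a cleaner argument --- once the congruence spanned by $\tilde{\mathbf n}$ is known to be normal to $\tilde{\mathbf r}$, its focal sheets $\mathbf r$ and $\mathbf r^{(1)}=\mathbf r+\tilde{\mathbf n}$ are the evolutes of $\tilde{\mathbf r}$, so the principal radii are the signed distances $f$ and $f+1$ along the common normal and differ by $1$ --- but as written it is only a consistency check: you assert that $\tilde{\mathbf r}+\rho_i\tilde{\mathbf n}$ ``should reproduce'' $\mathbf r$ and $\mathbf r^{(1)}$, which presupposes $\{\rho_1,\rho_2\}=\{f,f+1\}$, the very thing to be proved. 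To turn it into a proof you would need to invoke (or reprove) the classical fact that the focal surfaces of a normal rectilinear congruence coincide with the evolutes of its orthogonal surfaces; the paper instead defers the explicit curvature computation to the proof of Proposition~\ref{prop:y}.
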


\begin{proof}
The surface $\tilde{\mathbf r} = \mathbf r - f \tilde{\mathbf n}$ is normal 
to the congruence determined by the surface $\mathbf r$ and vectors $\tilde{\mathbf n}$, 
if and only if 
$\tilde{\mathbf r}_\xi \cdot \tilde{\mathbf n}
 = \tilde{\mathbf r}_\eta \cdot \tilde{\mathbf n} = 0$.
By virtue of the Gauss--Weingarten system~\eqref{GW:SG} above, the derivatives of
$\tilde{\mathbf r} = \mathbf r - f \tilde{\mathbf n}$ can be written as 
$$
\tilde{\mathbf r}_\xi  
 = (1
    + \frac{\sin 2\omega^{(1)} + \sin 2\omega}{2 \sin 2\omega} f
    + \frac{\sin(\omega^{(1)} - \omega)}{\sin 2\omega} f_\xi) \mathbf r_\xi
\\\qquad
 - (\frac{\sin 2(\omega^{(1)} + \omega)}{2 \sin 2\omega} f
    + \frac{\sin(\omega^{(1)} + \omega)}{\sin 2\omega} f_\xi) \mathbf r_\eta
 - \sin(\omega^{(1)} + \omega) f \mathbf n,
\\
\tilde{\mathbf r}_\eta  
 = (1
    - \frac{\sin 2\omega^{(1)} - \sin 2\omega}{2 \sin 2\omega} f
    - \frac{\sin(\omega^{(1)} + \omega)}{\sin 2\omega} f_\eta) \mathbf r_\eta
\\\qquad
 + (\frac{\sin 2(\omega^{(1)} - \omega)}{2 \sin 2\omega} f
    - \frac{\sin(\omega^{(1)} - \omega)}{\sin 2\omega} f_\eta) \mathbf r_\xi
 + \sin(\omega^{(1)} - \omega) f \mathbf n,
$$
Now it is straightforward to check that the conditions
$\tilde{\mathbf r}_\xi \cdot \tilde{\mathbf n} = 0$ and 
$\tilde{\mathbf r}_\eta \cdot \tilde{\mathbf n} = 0$ 
reduce to equations~\eqref{BTf}.
Moreover, it is a routine to verify that $\tilde{\mathbf r}$ is of constant 
astigmatism equal to~$1$.
Actually an equivalent computation will be done in the proof of the next proposition
under the same assumptions.
\end{proof}

Based on Bianchi's observation above, Proposition~\ref{prop:1} shows that the 
constant astigmatism surfaces $\tilde{\mathbf r} = \mathbf r - f \tilde{\mathbf n}$ 
can be found by purely algebraic manipulations and differentiation
once a one-parameter family of pseudopotentials $\omega^{(1)}$ is known.

However, since $\xi,\eta$ need not be curvature coordinates on the constant 
astigmatism surfaces $\tilde{\mathbf r} = \mathbf r - f \tilde{\mathbf n}$, 
Proposition~\ref{prop:1}, as it stands, yields neither a solution of the constant 
astigmatism equation nor an orthogonal equiareal pattern on the 
sphere~$\tilde{\mathbf n}$. Yet the coordinates $\xi,\eta$ have a geometric meaning
of a slip line field according to Definition~\ref{def:slip line}.

\begin{proposition}
\label{prop:y}
Let $\omega^{(1)}(\xi,\eta,c)$ be a general solution of system~\eqref{BT1}, where~$c$ 
is an integration constant, let $f = \ln(`d\omega^{(1)}/`d c)$ and
$x = \dif f/\dif c$. Let 
$y(\xi,\eta)$ be a solution of the system 
$$
\numbered\label{BTh}
y_\xi = `e^{-f} \sin(\omega + \omega^{(1)}), \quad
y_\eta = `e^{-f} \sin(\omega - \omega^{(1)}).
$$
Then $x,y$ are adapted curvature coordinates on the surface 
$\tilde{\mathbf r}$.
Moreover, if $z = `e^{-2 f}$, then $z(x,y)$ is a solution of the constant astigmatism 
equation~\eqref{CAE}.
Finally, $z \dif x^2 + \dif y^2/z$ is an orthogonal equiareal pattern on the unit sphere 
$\tilde{\mathbf n}$, while $\xi,\eta$ is the associated slip line field
according to Definition~\ref{def:slip line}.
\end{proposition}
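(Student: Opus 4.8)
The plan is to build on Proposition~\ref{prop:1}, which already tells us that $\tilde{\mathbf r} = \mathbf r - f\tilde{\mathbf n}$ is a constant astigmatism surface with evolutes $\mathbf r, \mathbf r^{(1)}$ and with $\tilde{\mathbf n}$ as unit normal. First I would compute the first fundamental form $\tilde{\mathbf I} = \dif\tilde{\mathbf r}\cdot\dif\tilde{\mathbf r}$ and the third fundamental form $\dif\tilde{\mathbf n}\cdot\dif\tilde{\mathbf n}$ in the $(\xi,\eta)$ coordinates, using the explicit expressions for $\tilde{\mathbf r}_\xi$, $\tilde{\mathbf r}_\eta$ displayed in the proof of Proposition~\ref{prop:1} together with \eqref{BTf}, \eqref{BTf'} to eliminate derivatives of $f$. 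Substituting $f_\xi = \cos(\omega^{(1)}+\omega)$, $f_\eta = \cos(\omega^{(1)}-\omega)$ should collapse the messy coefficients; I expect $\tilde{\mathbf I}$ to come out with a nonzero $\dif\xi\,\dif\eta$ cross term, reflecting that $\xi,\eta$ are \emph{not} curvature coordinates, but with the two principal directions bisecting the $\xi$- and $\eta$-directions at $\pi/4$ — this is exactly what Definition~\ref{def:slip line} demands, so that half of the final claim drops out directly from the shape of $\tilde{\mathbf I}$.

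Next I would pass to the genuine curvature coordinates. Since the lines of curvature bisect the asymptotic (here, slip) directions, the natural move is $x = \dif f/\dif c$ and $y$ as in \eqref{BTh}; one checks that $\dif x$ and $\dif y$ are, up to the metric, along the principal directions by verifying $\tilde{\mathbf r}_x \parallel$ one eigendirection of $\tilde{\mathbf{II}}$ and $\tilde{\mathbf r}_y$ the other. Here I would use \eqref{BTf'}, which gives $x_\xi = f'_\xi = -\ex^{f}\sin(\omega^{(1)}+\omega)$ and $x_\eta = -\ex^{f}\sin(\omega^{(1)}-\omega)$, against \eqref{BTh}'s $y_\xi = \ex^{-f}\sin(\omega+\omega^{(1)})$, $y_\eta = \ex^{-f}\sin(\omega-\omega^{(1)})$; the compatibility of \eqref{BTh} itself must be confirmed by cross-differentiating and invoking \eqref{BT1} and \eqref{BTf}. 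Rewriting $\tilde{\mathbf I}$ and $\tilde{\mathbf{III}} = z\dif x^2 + (1/z)\dif y^2$ in the $(x,y)$ frame, with $z = \ex^{-2f}$, I should recover precisely the fundamental forms of Section~\ref{sect:prelim}; comparing $u, v, \rho_1, \rho_2$ against the formulas there (noting $\rho_2 = \frac12\ln z = -f$, so $f = -\rho_2$, consistent with $\tilde{\mathbf r} = \mathbf r - f\tilde{\mathbf n}$ placing $\mathbf r$ at the $\rho_2$-evolute) confirms both that $(x,y)$ is adapted and that condition \eqref{adapted} holds.

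With $(x,y)$ established as adapted curvature coordinates and $\tilde{\mathbf{III}} = z\dif x^2 + (1/z)\dif y^2$ identified as the first fundamental form of the Gaussian sphere $\tilde{\mathbf n}$, the statement that $z(x,y)$ solves \eqref{CAE} is immediate from the Proposition in Section~\ref{sect:oep} (the sphere $\tilde{\mathbf n}$ carries the orthogonal equiareal pattern $z\dif x^2+(1/z)\dif y^2$, hence $z$ solves the constant astigmatism equation); alternatively it follows because the Gauss--Mainardi--Codazzi system of $\tilde{\mathbf r}$ reduces to \eqref{CAE}. Finally, the slip-line claim: I have the angle between $\partial_x$ and $\partial_\xi$ from the change-of-variables data above, and showing it equals $\pi/4$ (and likewise $\partial_y$ versus $\partial_\eta$) is a direct trigonometric computation from $x_\xi, x_\eta, y_\xi, y_\eta$ and the metric coefficients. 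The main obstacle I anticipate is purely the bookkeeping in the first step — expanding $\tilde{\mathbf I}$ from the two-line expressions for $\tilde{\mathbf r}_\xi, \tilde{\mathbf r}_\eta$ and simplifying with the sine-Gordon equation, \eqref{BTf} and \eqref{BTf'} until the coefficients reduce to recognizable trigonometric form; once that simplification is in hand, everything else is a routine verification or a citation of an earlier result. (As the authors note after Proposition~\ref{prop:1}, an equivalent constant-astigmatism computation is essentially already needed there, so it can be reused here.)
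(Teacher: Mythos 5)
Your route is essentially the paper's: continue the computation begun in the proof of Proposition~\ref{prop:1} to obtain $\tilde{\mathbf I}$ and $\tilde{\mathbf{II}}$ in the $(\xi,\eta)$ coordinates, extract the shape operator and its eigendata ($\rho_1=f+1$, $\rho_2=f$ up to orientation conventions), check that $x=\dif f/\dif c$ and the $y$ of \eqref{BTh} are constant along the two principal directions, read off $u,v$ and verify \eqref{adapted}, and obtain \eqref{CAE} either directly or by feeding $\tilde{\mathbf{III}}=z\dif x^2+\dif y^2/z$ into the proposition of Section~\ref{sect:oep}. All of that is sound, and your remark that the compatibility of \eqref{BTh} should be checked against \eqref{BT1} and \eqref{BTf} is a sensible point the paper leaves implicit.

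The one step that would fail as written is the claim that the $\pi/4$ bisection ``drops out directly from the shape of $\tilde{\mathbf I}$''. Definition~\ref{def:slip line} attaches the slip line field to the orthogonal equiareal pattern on the surface carrying it --- here the unit sphere $\tilde{\mathbf n}$ --- so the angle between $\partial_\xi$ and $\partial_x$ must be measured in the sphere's metric $z\dif x^2+\dif y^2/z$, i.e.\ via $\tilde{\mathbf n}_\xi,\tilde{\mathbf n}_x$, which is exactly what the paper's proof does. In that metric the computation is clean: $(x_\xi,y_\xi)=\sin(\omega^{(1)}+\omega)\,(-\ex^{f},\ex^{-f})$ and $z=\ex^{-2f}$ give $\cos\theta=-1/\sqrt2$. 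In the metric $\tilde{\mathbf I}=u^2\dif x^2+v^2\dif y^2$ with $u=(f+1)\ex^{-f}$, $v=f\,\ex^{f}$, the same direction gives $\cos\theta=-(f+1)/\sqrt{(f+1)^2+f^2}$, which is $-1/\sqrt2$ only when $f=-\tfrac12$: the slip directions do \emph{not} bisect the curvature lines on the constant astigmatism surface itself, only on its Gaussian image. Your closing sentence (``a direct trigonometric computation from $x_\xi,x_\eta,y_\xi,y_\eta$ and the metric coefficients'') is correct provided ``the metric'' is $z\dif x^2+\dif y^2/z$; the earlier attribution of the bisection to $\tilde{\mathbf I}$ should be deleted, since it is false and, if relied upon, would derail the last assertion of the proposition.
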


\begin{proof}
Continuing the routine computations started in the proof of Proposition~\ref{prop:1} we 
obtain
$$
\tilde{\mathbf I} = \frac12 (1 + 2 f + 2 f^2)
  (1 - \cos 2(\omega + \omega^{(1)})) \dif\xi^2
\\\quad  + (1 + 2 f) (\cos 2\omega - \cos 2\omega^{(1)})\dif\xi\dif\eta
\\\quad  + \frac12 (1 + 2 f + 2 f^2)
  (1 - \cos 2(\omega - \omega^{(1)})) \dif\xi^2,
\\
\tilde{\mathbf{II}} = \frac12 (1 + 2 f)
  (1 - \cos 2(\omega + \omega^{(1)})) \dif\xi^2
\\\quad  + (\cos 2\omega - \cos 2\omega^{(1)})\dif\xi\dif\eta
\\\quad  + \frac12 (1 + 2 f)
  (1 - \cos 2(\omega - \omega^{(1)})) \dif\xi^2.
$$
The corresponding shape operator is
$$
\frac1{2 f (f + 1)}
(\begin{array}{cc}
2 f + 1 & \frac{1 - \cos 2(\omega^{(1)} - \omega)}{\cos 2\omega^{(1)} - \cos 2\omega} \\
\frac{1 - \cos 2(\omega^{(1)} + \omega)}{\cos 2\omega^{(1)} - \cos 2\omega} & 2 f + 1
\end{array}).
$$
Its eigenvalues are the curvatures, namely $1/f$ and $1/(f + 1)$.
We choose $\rho_1 = f + 1$, $\rho_2 = f$ to have $\rho_1 - \rho_2 = 1$.
The eigenvectors yield two principal directions 
$$
\Pi_\pm = (\cos 2\omega^{(1)} - \cos 2\omega) \frac\partial{\partial\xi}
  \pm (1 - \cos 2(\omega + \omega^{(1)})) \frac\partial{\partial\eta}.
$$
It is easy to check that $x$ and $y$ are integrals of the fields $\Pi_+$ and
$\Pi_-$, respectively.
This implies that $x,y$ are curvature coordinates.
Now, equations
$$
\tilde{\mathbf{I}} = u^2 \dif x^2 + v^2 \dif y^2, \quad 
\tilde{\mathbf{II}} = (u^2/\rho_1) \dif x^2 + (v^2/\rho_2) \dif y^2.
$$
yield 
$$
\numbered\label{uv}
u = (f + 1)/`e^f, \quad  v = f `e^f.
$$
Substituting into condition~\eqref{adapted}, we see that
the curvature coordinates $x,y$ are adapted.
Moreover, it is straightforward to verify that $z = `e^{-2 f}$ satisfies the 
equation of constant astigmatism~\eqref{CAE} with respect to independent 
variables~$x,y$.

To prove that $\xi,\eta$ is the associated slip line field on the sphere 
$\tilde{\mathbf n}$, it suffices to check that 
$$
\tilde{\mathbf n}_\xi =
 \frac{\cos 3\omega - \cos\omega - \cos(2\omega^{(1)} - \omega)
  + \cos(2\omega^{(1)} + \omega)}{8 \sin^2 \omega \cos\omega} \mathbf r_\xi
\\\qquad
 + \frac{\cos(2\omega^{(1)} + \omega) - \cos(2\omega^{(1)} + 3\omega)}
    {8 \sin^2 \omega \cos\omega} \mathbf r_\eta
 + \sin(\omega^{(1)} + \omega)\,\mathbf n
\\
\tilde{\mathbf n}_\eta =
 \frac{\cos 3\omega - \cos\omega + \cos(2\omega^{(1)} - \omega)
  - \cos(2\omega^{(1)} + \omega)}{8 \sin^2 \omega \cos\omega} \mathbf r_\eta
\\\qquad
 + \frac{\cos(2\omega^{(1)} - \omega) - \cos(2\omega^{(1)} - 3\omega)}
    {8 \sin^2 \omega \cos\omega} \mathbf r_\xi
 - \sin(\omega^{(1)} - \omega)\,\mathbf n
$$
bisect the right angle between
$$
\tilde{\mathbf n}_x =
 -\frac{\cos(\omega^{(1)} - \omega)}{\sin 2\omega `e^f} \mathbf r_\xi
 + \frac{\cos(\omega^{(1)} + \omega)}{\sin 2\omega `e^f} \mathbf r_\eta
\quad\text{and}\quad
\tilde{\mathbf n}_y = `e^f \mathbf n,
$$
according to Definition~\ref{def:slip line}. This is straightforward.
\end{proof}

\begin{corollary}
If $S$ is a constant astigmatism surface, then the asymptotic coordinates 
on the focal surfaces of $S$ correspond to slip line fields on the Gaussian 
image of $S$.
\end{corollary}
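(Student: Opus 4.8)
The plan is to derive this as a direct consequence of Proposition~\ref{prop:y}, reading the asymptotic coordinates on one of the focal surfaces as the coordinates~$\xi,\eta$ appearing there. First I would recall the setup of the preliminaries: if $S$ is a surface of constant astigmatism with adapted curvature coordinates and associated function~$z$, then its two evolutes $\mathbf r + \rho_1\mathbf n$ and $\mathbf r + \rho_2\mathbf n$ are complementary pseudospherical surfaces (equation~\eqref{comp evolutes}). So the focal surfaces of~$S$ are precisely a pair $\mathbf r^{(1)},\mathbf r$ of complementary pseudospherical surfaces in the sense of Section~\ref{sect:BTSP}, and one may coordinatise them by parameters $\xi,\eta$ that are simultaneously Chebyshev and asymptotic, with the corresponding sine-Gordon function~$\omega$ and complementary solution~$\omega^{(1)}$.

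Next I would invoke Proposition~\ref{prop:y} in the reverse direction. Starting from this pair of complementary pseudospherical surfaces (the focal surfaces of $S$), Proposition~\ref{prop:y} reconstructs a constant astigmatism surface $\tilde{\mathbf r} = \mathbf r - f\tilde{\mathbf n}$ carrying adapted curvature coordinates~$x,y$, and it asserts that the original asymptotic-Chebyshev coordinates $\xi,\eta$ form the slip line field associated with the orthogonal equiareal pattern $z\dif x^2 + \dif y^2/z$ on the Gaussian sphere $\tilde{\mathbf n}$. The one thing that must be checked is that $\tilde{\mathbf r}$ obtained this way is (up to rigid motion and the labelling of the two sheets) the original surface~$S$: this is exactly Bianchi's classical statement that a constant astigmatism surface is recovered as a normal surface of the pseudospherical congruence joining its two evolutes, and it is the content of Proposition~\ref{prop:1}. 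Since a surface of constant astigmatism is determined up to rigid motion by its pair of evolutes, $\tilde{\mathbf n}$ coincides with the Gaussian image of~$S$. Therefore the asymptotic coordinates $\xi,\eta$ on the focal surfaces are the slip line field on the Gaussian image of~$S$, which is the assertion.

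The main obstacle, such as it is, is the bookkeeping of which coordinates play which role: Proposition~\ref{prop:y} takes as input a pseudospherical surface $\mathbf r$ together with a one-parameter family $\omega^{(1)}(\xi,\eta,c)$ and produces a constant astigmatism surface, whereas the corollary starts from the constant astigmatism surface and wants a statement about its focal surfaces. One has to make sure the passage is genuinely an equivalence — that every pseudospherical congruence arising as the normals of a constant astigmatism surface is of the type treated in Section~\ref{sect:BTSP}, i.e. that the complementary solution~$\omega^{(1)}$ really does depend on the requisite integration constant so that $f = \ln(\dif\omega^{(1)}/\dif c)$ makes sense. This is guaranteed because Bianchi's construction produces $\omega^{(1)}$ as the general solution of system~\eqref{BT1}, which is a first-order ODE system in $\xi$ (and in $\eta$) and hence carries precisely one integration constant. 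With that observed, everything reduces to the already-proved Propositions~\ref{prop:1} and~\ref{prop:y}, and no further computation is needed.

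\begin{proof}
Let $S$ be a surface of constant astigmatism with unit normal $\mathbf n$. By~\eqref{comp evolutes}, its focal surfaces $\mathbf r + \rho_1\mathbf n$ and $\mathbf r + \rho_2\mathbf n$ form a pair of complementary pseudospherical surfaces; choose on them parameters $\xi,\eta$ that are simultaneously Chebyshev and asymptotic, so that one of the sheets is $\mathbf r(\xi,\eta)$ with sine-Gordon function~$\omega$ and the other is $\mathbf r^{(1)}$ with complementary solution $\omega^{(1)}(\xi,\eta,c)$, the general solution of~\eqref{BT1}. Applying Proposition~\ref{prop:1} to this pair recovers a constant astigmatism surface $\tilde{\mathbf r} = \mathbf r - f\tilde{\mathbf n}$, $f = \ln(\dif\omega^{(1)}/\dif c)$, having $\mathbf r$ and $\mathbf r^{(1)}$ as its evolutes. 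Since a constant astigmatism surface is determined up to a rigid motion by its two focal surfaces, $\tilde{\mathbf r}$ coincides with~$S$ up to a rigid motion, and in particular the sphere $\tilde{\mathbf n}$ coincides with the Gaussian image of~$S$. By Proposition~\ref{prop:y}, the coordinates $\xi,\eta$ constitute the slip line field associated with the orthogonal equiareal pattern $z\dif x^2 + \dif y^2/z$ on~$\tilde{\mathbf n}$, where $x = \dif f/\dif c$, $y$ solves~\eqref{BTh} and $z = \ex^{-2f}$. This is exactly the assertion that the asymptotic coordinates on the focal surfaces of~$S$ correspond to a slip line field on the Gaussian image of~$S$.
\end{proof}
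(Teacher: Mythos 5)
Your argument is correct and is essentially the one the paper intends: the corollary is stated as an immediate consequence of Proposition~\ref{prop:y} (the paper supplies no separate proof), obtained by recognising the focal surfaces of $S$ as a complementary pseudospherical pair and identifying their asymptotic coordinates with the $\xi,\eta$ of that proposition. Your extra care in checking that the reconstructed surface $\tilde{\mathbf r}$ coincides with $S$ (via Proposition~\ref{prop:1} and the uniqueness of the normal surface of the congruence) just makes explicit what the paper leaves implicit.
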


Proposition~\ref{prop:y} allows us to construct one of the adapted curvature coordinates 
by purely algebraic manipulations and differentiation, while
the other curvature coordinate has to be obtained by integration. 
It is therefore natural to ask whether one could obtain superposition
formulas for $f,x,y$ similar to formula~\eqref{SP}.
The answer is positive.


\begin{definition} \rm
Given two sine-Gordon solutions $\omega$ and $\omega^{(\lambda)}$ related by the 
B\"acklund transformation $\mathcal B^{(\lambda)}$, let 
$f^{(\lambda)}$,
$x^{(\lambda)}$, 
$y^{(\lambda)}$ 
denote functions satisfying the compatible equations
$$
\numbered\label{fxy}
\begin{gathered}
f^{(\lambda)}_\xi = \lambda \cos(\omega^{(\lambda)} + \omega),
& 
f^{(\lambda)}_\eta = \frac1\lambda \cos(\omega^{(\lambda)} - \omega),
\\
x^{(\lambda)}_\xi = \lambda `e^{f^{(\lambda)}} \sin(\omega^{(\lambda)} + \omega),
& 
x^{(\lambda)}_\eta = \frac1\lambda `e^{f^{(\lambda)}} \sin(\omega^{(\lambda)} - \omega),
\\
y^{(\lambda)}_\xi = \lambda `e^{-f^{(\lambda)}} \sin(\omega^{(\lambda)} + \omega),
& 
y^{(\lambda)}_\eta = -\frac1\lambda `e^{-f^{(\lambda)}} \sin(\omega^{(\lambda)} - \omega).
\end{gathered}
$$
The quantities $f^{(\lambda)}$, $x^{(\lambda)}$, $y^{(\lambda)}$ 
will be called {\it associated potentials} corresponding to the pair
$\omega,\omega^{(\lambda)}$.
\end{definition}

\begin{proposition}
Let $\omega, \omega^{(\lambda_1)}, \omega^{(\lambda_2)}, 
 \omega^{(\lambda_1\lambda_2)}$ be four sine-Gordon solutions  
related by the Bianchi superposition principle~\eqref{SP}.
Then the associated potentials 
$f^{(\lambda_1\lambda_2)}$, $x^{(\lambda_1\lambda_2)}$, $y^{(\lambda_1\lambda_2)}$
corresponding to the pair $\omega^{(\lambda_1)},\omega^{(\lambda_1\lambda_2)}$
are related to the associated potentials 
$f^{(\lambda_2)}$, $x^{(\lambda_2)}$, $y^{(\lambda_2)}$ corresponding to the pair 
$\omega,\omega^{(\lambda_2)}$ by formulas
$$
\numbered\label{SPp}
f^{(\lambda_1\lambda_2)} = f^{(\lambda_2)}
 - \ln(2 \cos(\omega^{(\lambda_1)} - \omega^{(\lambda_2)})
       - \frac{\lambda_1}{\lambda_2} - \frac{\lambda_2}{\lambda_1}),
\\
x^{(\lambda_1\lambda_2)}
 = \frac{\lambda_1 \lambda_2}{\lambda_1^2 - \lambda_2^2} 
   (x^{(\lambda_2)}
     -  
      \frac{2 \lambda_1 \lambda_2 \sin(\omega^{(\lambda_1)} - \omega^{(\lambda_2)})}
         {\lambda_1^2
          - 2 \lambda_1 \lambda_2 \cos(\omega^{(\lambda_1)} - \omega^{(\lambda_2)})
          + \lambda_2^2} `e^{f^{(\lambda_2)}}),
\\
y^{(\lambda_1\lambda_2)}
 = (\frac{\lambda_1}{\lambda_2} - \frac{\lambda_2}{\lambda_1}) y^{(\lambda_2)}
   - 2 `e^{-f^{(\lambda_2)}} \sin(\omega^{(\lambda_1)} - \omega^{(\lambda_2)}),
$$
up to an additive constant.
\end{proposition}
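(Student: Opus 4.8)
The strategy is to verify the three superposition formulas in \eqref{SPp} directly, by checking that the right-hand sides satisfy the defining first-order systems \eqref{fxy} for the pair $\omega^{(\lambda_1)},\omega^{(\lambda_1\lambda_2)}$, since a solution of such a compatible linear system is determined up to the stated additive constant. The single algebraic input that makes everything work is the Bianchi superposition principle \eqref{SP}, which must first be converted into usable trigonometric identities: writing $\Delta = \omega^{(\lambda_1)} - \omega^{(\lambda_2)}$ and $\mu = (\lambda_1+\lambda_2)/(\lambda_1-\lambda_2)$, the relation $\tan\tfrac12(\omega^{(\lambda_1\lambda_2)}-\omega) = \mu\tan\tfrac12\Delta$ yields rational expressions for $\cos(\omega^{(\lambda_1\lambda_2)}\pm\omega^{(\lambda_1)})$ and for $\sin(\omega^{(\lambda_1\lambda_2)}\pm\omega^{(\lambda_1)})$ in terms of $\Delta$ and the ratio $\lambda_1/\lambda_2$. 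I would record these at the outset; the denominator $\lambda_1^2 - 2\lambda_1\lambda_2\cos\Delta + \lambda_2^2$ appearing in the $x$-formula, and the combination $2\cos\Delta - \lambda_1/\lambda_2 - \lambda_2/\lambda_1$ appearing in the $f$-formula, are exactly the quantities produced this way, so the shape of \eqref{SPp} is dictated by \eqref{SP}.

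First I would treat $f^{(\lambda_1\lambda_2)}$. Differentiating $f^{(\lambda_1\lambda_2)} = f^{(\lambda_2)} - \ln(2\cos\Delta - \lambda_1/\lambda_2 - \lambda_2/\lambda_1)$ with respect to $\xi$, one uses $f^{(\lambda_2)}_\xi = \lambda_2\cos(\omega^{(\lambda_2)}+\omega)$ from \eqref{fxy} together with the B\"acklund equations \eqref{BT} for both $\omega^{(\lambda_1)}$ and $\omega^{(\lambda_2)}$ to express $\Delta_\xi$, and then the superposition identities to collapse the result to $\lambda_1\cos(\omega^{(\lambda_1\lambda_2)}+\omega^{(\lambda_1)})$; the $\eta$-derivative is analogous with $\lambda\mapsto 1/\lambda$. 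Once $f^{(\lambda_1\lambda_2)}$ is under control, the factor $\ex^{f^{(\lambda_1\lambda_2)}} = \ex^{f^{(\lambda_2)}}/(2\cos\Delta - \lambda_1/\lambda_2 - \lambda_2/\lambda_1)$ is known explicitly, which is what lets the $x$- and $y$-formulas be written purely in terms of the $\lambda_2$-data. For $y^{(\lambda_1\lambda_2)}$ I would differentiate the stated expression, substitute $y^{(\lambda_2)}_\xi = \lambda_2\ex^{-f^{(\lambda_2)}}\sin(\omega^{(\lambda_2)}+\omega)$ and the $f,\omega$ derivatives, and check it equals $\lambda_1\ex^{-f^{(\lambda_1\lambda_2)}}\sin(\omega^{(\lambda_1\lambda_2)}+\omega^{(\lambda_1)})$; the $x$-formula is handled the same way, the extra algebraic complexity coming only from the rational prefactor.

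I expect the computations themselves to be routine but heavy; the genuine obstacle is organizing the trigonometric reduction. The crux is that after differentiating one lands on an expression involving $\omega,\omega^{(\lambda_1)},\omega^{(\lambda_2)}$ and their pairwise sums/differences, and one must recognize it as the claimed expression involving $\omega^{(\lambda_1)}$ and $\omega^{(\lambda_1\lambda_2)}$ — this requires eliminating $\omega^{(\lambda_1\lambda_2)}$ via \eqref{SP} and then verifying a nontrivial identity in the remaining three angles subject to the two B\"acklund constraints. I would manage this by consistently using the half-angle (Weierstrass) substitution $t_j = \tan\tfrac12(\omega^{(\lambda_j)}-\omega)$, under which \eqref{SP} becomes the linear relation $t_{12} = \mu(t_1 - t_2)/(1 + t_1 t_2)\cdot(\text{correction})$ — more precisely $\tan\tfrac12(\omega^{(\lambda_1\lambda_2)}-\omega^{(\lambda_1)})$ becomes rational in $t_1,t_2$ — turning every identity to be checked into equality of rational functions, which is mechanical. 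It is worth remarking that the $f$-formula is essentially Bianchi's own superposition for the derivative-of-the-pseudopotential quantity, so the new content is concentrated in the $x$- and $y$-formulas; verifying those two is where the bulk of the work lies, and I would present only the $\xi$-derivative checks in detail, noting that the $\eta$-checks follow by the substitution $\lambda\mapsto 1/\lambda$, $\omega^{(\lambda)}+\omega \mapsto \omega^{(\lambda)}-\omega$ that is already built into the structure of \eqref{BT}, \eqref{BTf} and \eqref{fxy}.
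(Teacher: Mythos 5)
Your approach is the same as the paper's: its proof consists precisely of verifying that the right-hand sides of \eqref{SPp} satisfy the defining system \eqref{fxy} written for the pair $\omega^{(\lambda_1)},\omega^{(\lambda_1\lambda_2)}$, using the superposition formula \eqref{SP} and the B\"acklund equations, the solution being determined up to an additive constant. One bookkeeping correction: since $\omega^{(\lambda_1\lambda_2)}=\mathcal B^{(\lambda_2)}\omega^{(\lambda_1)}$, the target equations carry the parameter $\lambda_2$, not $\lambda_1$ (e.g.\ $f^{(\lambda_1\lambda_2)}_\xi=\lambda_2\cos(\omega^{(\lambda_1\lambda_2)}+\omega^{(\lambda_1)})$), a slip the verification itself would force you to catch.
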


\begin{proof}
It is straightforward to check that 
$f^{(\lambda_1\lambda_2)}$, $x^{(\lambda_1\lambda_2)}$, $y^{(\lambda_1\lambda_2)}$
given by formulas~\eqref{SPp} satisfy
$$ 
\hskip-3pc
\begin{gathered}
f^{(\lambda_1\lambda_2)}_\xi
 = \lambda_2 \cos(\omega^{(\lambda_1\lambda_2)} + \omega^{(\lambda_1)}),
& 
f^{(\lambda_1\lambda_2)}_\eta
 = \frac1{\lambda_2} \cos(\omega^{(\lambda_1\lambda_2)} - \omega^{(\lambda_1)}),
\\
x^{(\lambda_1\lambda_2)}_\xi
 = \lambda_2 `e^{f^{(\lambda)}}
   \sin(\omega^{(\lambda_1\lambda_2)} + \omega^{(\lambda_1)}),
& %
x^{(\lambda_1\lambda_2)}_\eta
 = \frac1{\lambda_2} `e^{f^{(\lambda_1\lambda_2)}}
   \sin(\omega^{(\lambda_1\lambda_2)} - \omega^{(\lambda_1)}),
\\
y^{(\lambda_1\lambda_2)}_\xi
 = \lambda_2 `e^{-f^{(\lambda_1\lambda_2)}}
   \sin(\omega^{(\lambda_1\lambda_2)} + \omega^{(\lambda_1)}),
& 
y^{(\lambda_1\lambda_2)}_\eta
 = -\frac1{\lambda_2} `e^{-f^{(\lambda_1\lambda_2)}}
   \sin(\omega^{(\lambda_1\lambda_2)} - \omega^{(\lambda_1)}).
\end{gathered}
$$
whenever
$f^{(\lambda_2)}$, $x^{(\lambda_2)}$, $y^{(\lambda_2)}$ satisfy~\eqref{fxy} with
$\lambda = \lambda_2$.
\end{proof}

\begin{example} \rm
({\it One-soliton solutions})
Let us apply the procedure outlined above to the one-soliton solutions 
$$
\omega^{(\lambda)} = \mathcal B^{(\lambda)}_{0} (0) = 2 \arctan \exp p_\lambda, 
$$
of the sine-Gordon equation.
Here and in what follows we denote
$$
p_\lambda = \lambda \xi + \frac \eta\lambda,
\quad
q = \xi - \eta.
$$
As is well known, these one-soliton solutions correspond to the Dini surfaces
(helicoids of the tractrix)
$$
\mathbf r^{(\lambda)} = \frac{2 \lambda}{1 + \lambda^2} 
 (\begin{array}{c}
   `sech p_\lambda \sin q \\
   `sech p_\lambda \cos q \\
   p_\lambda - `tanh p_\lambda
  \end{array})
 + \frac{1 - \lambda^2}{1 + \lambda^2} 
   (\begin{array}{c} 0 \\ 0 \\ q\end{array}).
$$
We now proceed to the complementary surfaces of the Dini surfaces, which
correspond to the nonlinear superposition of $\omega^{(\lambda)}$ and $\omega^{(1)}$,
i.e., the two-soliton solutions
$$
\omega^{(\lambda 1)} = \mathcal B^{(1)}_{c} (\omega^{(\lambda)})
 = 2 \arctan
      \frac{(\lambda + 1)(`e^{p_\lambda} - `e^{p_1 + c})}
           {(\lambda - 1)(1 + `e^{p_\lambda + p_1 + c})},
$$ 
where, obviously, $p_1 = \xi + \eta$.
The particular case of $\lambda = 1$ (the Beltrami pseudosphere) is excluded from
consideration. 


After tedious computations, one obtains the resulting quantities 
$x = x^{(\lambda 1)}$, $y = y^{(\lambda 1)}$, 
$z = `e^{-2 f^{\lambda 1}}$. They are
$$
\hskip-3pc
\begin{gathered}
x = \frac{\lambda}{\lambda^2 - 1} 
\times\frac
{(\lambda - 1)^2 (c_2 A^2 B^2 - c_1)
 - (\lambda + 1)^2 (c_1 B^2 + c_2 A^2)
 + 4 (c_1 - c_2) \lambda A B
}
{(\lambda - 1)^2 (A^2 B^2 + 1)
 + (\lambda + 1)^2 (B^2 + A^2)
 - 8 \lambda A B
}, 
\\
y = \frac{4 \ln B}{c_1 + c_2}
 - 2 \frac{(\lambda^2 + 1) \ln A}{(c_1 + c_2) \lambda}
 + \frac
{4 \lambda (A B + 1)(A - B)
 + c_3 (c_1 + c_2) (\lambda^2 - 1) A (1 + B^2) 
}
{(c_1 + c_2) \lambda `e^s (1 + `e^{2p_\lambda})},
\\
z = 
(\frac
{(\lambda - 1)^2 (A^2 B^2 + 1)
 + (\lambda + 1)^2 (B^2 + A^2)
 - 8 \lambda A B
}{(c_1 + c_2) \lambda A (1 + B^2)})^2,
\end{gathered}
$$ 
where $A = `e^{p_1} = `e^{\xi + \eta}$ and 
$B = `e^{p_\lambda} = `e^{\lambda \xi + \eta/\lambda}$, 
while $c_1,c_2,c_3$ are arbitrary constants.
By eliminating $\xi,\eta$ one obtains
$$
\numbered\label{y(x,z)}
\hskip-1pc
y = \frac{1}{c_1 + c_2}
 \left(\frac{4 (AB+1) (A-B)}{(B^2+1) A}
  - 2 \frac{\lambda^2+1}{\lambda} \ln A
  + 4 \ln B
\right)
 + \frac{(\lambda^2-1) c_3}{\lambda},
$$
where 
$$
A = \frac{\lambda (\lambda^2+1) (c_1+c_2) \sqrt{z} - \sqrt{k}}
{(\lambda^2-1)^2 + (\lambda^2 x - \lambda c_2 - x)^2 z},
\\
B = \frac{2 \lambda^2 (c_1+c_2)  \sqrt{z} + \sqrt{k}}
{(\lambda^2-1)^2 + (\lambda^2 x - \lambda c_2 - x) (\lambda^2 x+\lambda c_1-x) z},
\\
k = -\left[(\lambda^2-1)^2 + 2 (c_1+c_2) \lambda^2 \sqrt{z} + (\lambda^2 x-\lambda c_2-x) (\lambda^2 x+\lambda c_1-x) z\right]
\\
\qquad \times \left[(\lambda^2-1)^2 - 2 (c_1+c_2) \lambda^2 \sqrt{z} + (\lambda^2 x - \lambda c_2 - x) (\lambda^2 x+\lambda c_1-x) z \right].
$$
Eq.~\eqref{y(x,z)} is an implicit formula for a solution $z(x,y)$ of the constant astigmatism equation.
Using Proposition \ref{prop:1} it is now easy to 
construct the surface of constant astigmatism from its two evolutes 
$\mathbf{r}^{(\lambda)}$ and $\mathbf{r}^{(\lambda 1)}$ 
as well as the orthogonal slip line net on the Gaussian sphere $\tilde{\mathbf{n}} = \mathbf{r}^{(\lambda 1)} - \mathbf{r}^{(\lambda)}$, part of 
which can be seen on Fig.~\ref{fig:features} 
(actually, the sphere is multiply covered).
This example also demonstrates that sphere's slip line fields are prone to 
developing singularities.

\begin{figure}
\centering
\includegraphics[width = 7.5cm]{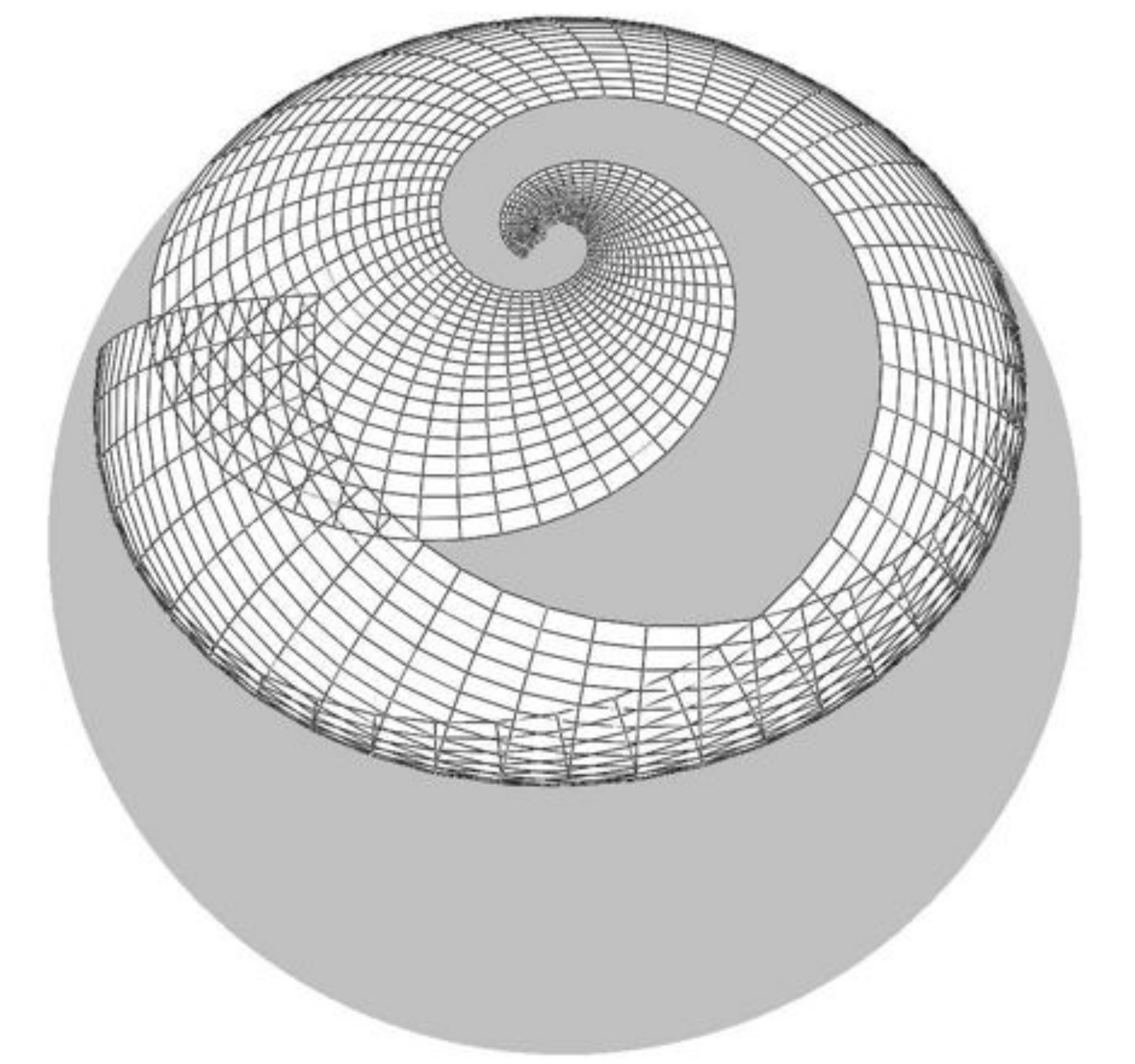}
\caption{Sphere's slip line field with features}
\label{fig:features}
\end{figure}
\end{example}

Construction of constant astigmatism surfaces related to the 
sine-Gordon $n$-soliton solutions is postponed to a separate paper.

\section{Lipschitz surfaces in principal coordinates} \rm \label{sect:Lipsch}


In 1887 Lipschitz~\cite{Lip} presented a class of surfaces of constant
astigmatism in terms of spherical coordinates related to the Gaussian image. 
To find the corresponding solutions of the constant astigmatism equation, 
one has to obtain the surfaces in terms of the pricipal coordinates.
Redoing the computation is easier than transforming the Lipschitz result.

Consider the unit sphere 
$\mathbf{n} = (\cos\lon \sin\lat, \sin\lon \sin\lat, \cos\lat)$ 
parameterised by the latitude $\lat$ and longitude $\lon$. 
To specify an orthogonal equiareal pattern we let $\lat,\lon$ denote yet unknown 
functions of parameters $x,y$. Lipschitz defines a {\it Stellungswinkel\/} to 
be the angle $\omega$ between 
$\mathbf{n}_\lat$ and 
$\mathbf{n}_x = \lon_x \mathbf{n}_\lon + \lat_x \mathbf{n}_\lat$.
The Lipschitz class is specified by allowing the Stellungswinkel to 
depend solely on the latitude~$\theta$.

\begin{theorem}
The general Lipschitz solution of the constant astigmatism equation~\eqref{CAE} 
depends on four constants $h_{11},h_{10},h_{01},h_{00}$
and consists of functions
$$
\numbered\label{Lipsch z}
z = \frac{1 - h^2 + \sqrt{(1 - h^2)^2 - 4 (H_1 h - H_2)^2}}{2 (h_{11} x + h_{01})^2},
$$ 
where
$$
h = h_{11} x y  + h_{10} x + h_{01} y + h_{00}, \quad 
H_1 = h_{11}, \quad H_2 = h_{11} h_{00} - h_{10} h_{01}.
$$
The constants $H_1,H_2$ are invariant with respect to 
the translations $\xTranslation,\yTranslation$, the scaling $\Scaling$
and the involution $\Involution$. 
Formula~\eqref{Lipsch z} covers all Lipshitz solutions except a particular 
solution
$$
z = \frac1{c_1 - (x - c_0)^2},
$$ 
$c_1,c_0$ being arbitrary constants.
\end{theorem}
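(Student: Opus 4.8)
The plan is to follow the route announced in the introduction: starting from Lipschitz's ansatz, compute the orthogonal equiareal pattern it induces on the Gaussian sphere, read off $z(x,y)$ from the resulting first fundamental form, and only then recognise the answer as a symmetry-invariant solution. First I would write the metric of the unit sphere $\mathbf n = (\cos\lon\sin\lat,\sin\lon\sin\lat,\cos\lat)$ as $\dif\lat^2 + \sin^2\lat\,\dif\lon^2$ and require that, as a function of the still unknown $\lat(x,y)$ and $\lon(x,y)$, it coincide with the orthogonal equiareal form~\eqref{oep}. This gives $\lat_x^2 + \sin^2\lat\,\lon_x^2 = z$, $\lat_y^2 + \sin^2\lat\,\lon_y^2 = 1/z$ and $\lat_x\lat_y + \sin^2\lat\,\lon_x\lon_y = 0$. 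Introducing the Stellungswinkel through $\cos\omega = \lat_x/\sqrt z$ and resolving the orthogonality (the remaining sign and labelling choices being immaterial), one is led to the first-order system $\lat_x = \sqrt z\cos\omega$, $\lat_y = -\sin\omega/\sqrt z$, $\sin\lat\,\lon_x = \sqrt z\sin\omega$, $\sin\lat\,\lon_y = \cos\omega/\sqrt z$, in which $\omega = \omega(\lat)$ by the Lipschitz hypothesis.

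The heart of the argument is the compatibility analysis. By the Lipschitz hypothesis the right-hand sides of the $\lon_x$- and $\lon_y$-equations are $\lat_x$, respectively $\lat_y$, times functions of $\lat$ alone; integrating them shows that $\lon$ is a $\lat$-antiderivative plus a function of $y$, and simultaneously a $\lat$-antiderivative plus a function of $x$, whence a fixed function $W(\lat)$ equals $h(x)-g(y)$ and $\lat$ is a function of $h(x)-g(y)$. Substituting this into $\lat_x\lat_y = -\frac12\sin 2\omega(\lat)$ converts it into a functional equation of the form $h'(x)\,g'(y) = \Phi\bigl(h(x)-g(y)\bigr)$, whose general solution forces $1/h'$ and $1/g'$ to be affine and $\Phi$ to be exponential; one further integration then makes $\cos\lat$ an affine function of the product of an affine function of $x$ and an affine function of $y$. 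After renaming the four constants of integration $h_{11},h_{10},h_{01},h_{00}$ this reads $\cos\lat = h$ with $h$ the bilinear function of the statement, while the same computation yields $\sin 2\omega = 2(H_1\cos\lat - H_2)/\sin^2\lat$ with $H_1,H_2$ as claimed. Feeding $\cos\lat = h$ and $\omega$ back into $z = \lat_x^2 + \sin^2\lat\,\lon_x^2$ — conveniently rewritten as $z = \sin^2\lat\,\sin^2\omega/(h_{11}x+h_{01})^2$, since $h_{11}x+h_{01}=\partial h/\partial y$ — and using $\sin^2\omega=\frac12(1-\cos 2\omega)$ together with $\cos 2\omega=\sqrt{1-\sin^2 2\omega}$ produces formula~\eqref{Lipsch z}.

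It then remains to check the invariance of $H_1,H_2$ and to isolate the exceptional solution. Here $H_1=h_{11}$ equals $\partial_x\partial_y h$ and $H_2=h_{11}h_{00}-h_{10}h_{01}$ is the $2\times2$ determinant of the coefficient matrix of $h$; these are precisely the joint invariants of the bilinear form $h$ under affine reparametrisations of $x$ and of $y$ taken separately, so a direct substitution of $\xTranslation_c$, $\yTranslation_c$, $\Scaling_c$ and $\Involution$ — each acting on $(x,y)$ by exactly such a reparametrisation, the accompanying rescaling of $z$ being absorbed into the $(h_{11}x+h_{01})^2$ in the denominator, and for $\Involution$ one first recognising that $1/z$ is again of the form~\eqref{Lipsch z} with $h$ replaced by a suitable transform — leaves $H_1,H_2$ unchanged. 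Formula~\eqref{Lipsch z} degenerates exactly when $h_{11}=h_{01}=0$, that is, when $\lat$ depends on $x$ alone and $\omega\equiv 0$; rerunning the computation in this sub-case, where $\lon$ is a constant multiple of $y$ and the pattern is of Archimedean type, yields $z=1/(c_1-(x-c_0)^2)$ instead, which completes the classification. One may also note that every solution~\eqref{Lipsch z} is invariant under the one-parameter group of Lie symmetries fixing $h(x)-g(y)$ and $z/(h_{11}x+h_{01})^2$.

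The step I expect to be the main obstacle is the compatibility analysis above: pushing the first-order system through to the functional equation $h'g'=\Phi(h-g)$, integrating it while keeping honest track of the several constants, and recognising that the outcome collapses into a single bilinear $h$ with denominator $(\partial h/\partial y)^2$, the two constants $H_1,H_2$ of genuine geometric meaning (they alone determine the Stellungswinkel) surfacing as the invariants. A secondary delicate point is to make sure that no Lipschitz solution is lost in the sign choices for the square roots — the sign of $\cos 2\omega$ being precisely what distinguishes the $+$ sign in~\eqref{Lipsch z} — and that the degenerate branch coincides with the stated exceptional solution.
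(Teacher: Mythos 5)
Your proposal is correct and follows the paper's overall strategy: recompute the orthogonal equiareal pattern on the Gaussian sphere from Lipschitz's ansatz, reduce to the first-order system for $\lat,\lon$ with the Stellungswinkel depending on $\lat$ alone, and read $z$ off the first fundamental form. The one step you handle genuinely differently is the proof that $\cos\lat$ is bilinear in $(x,y)$. The paper uses the cross-derivative condition $\lon_{xy}=\lon_{yx}$, which collapses to $(\sin\lat)\,\lat_{xx}+(\cos\lat)\,\lat_x^2=0$, i.e.\ $\partial_x^2\cos\lat=0$; it then writes $\cos\lat=h_1(y)x+h_0(y)$ and forces $h_1,h_0$ to be linear in $y$ by requiring $\lat_x\lat_y=\Theta/(\Theta^2+1)$ to depend on $\lat$ alone (a Jacobian computation). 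You instead integrate the two $\lon$-equations separately to obtain $W(\lat)=b(x)-a(y)$, derive the functional equation $b'a'=\Phi(b-a)$, and solve it ($\Phi$ exponential, $e^{-kb}$ and $e^{ka}$ affine), which combined with $W'=\Phi(W)\sin\lat$ makes $e^{-kW}$ affine in $\cos\lat$ and hence $\cos\lat$ bilinear; this checks out and, as a bonus, makes the origin of $H_1,H_2$ (through $\sin 2\omega=2(H_1\cos\lat-H_2)/\sin^2\lat$) more transparent, at the cost of a less elementary intermediate step. Your treatment of the exceptional branch $h_{11}=h_{01}=0$ agrees with the theorem as stated (the paper's proof actually lands on the involuted version, $z$ quadratic in $y$, and identifies the two tacitly), and your verification of the invariance of $H_1,H_2$ supplies an argument the paper omits altogether; the only point left implicit on both sides is that $\Involution$ and the swap $h_{10}\leftrightarrow h_{01}$ exchange the two signs of the square root in~\eqref{Lipsch z}.
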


\begin{proof}
Computing the first fundamental form in two ways
we have
$$
\dif\lat^2 + \sin^2\lat \dif\lon^2 = z \dif x^2 + (1/z) \dif y^2,
$$
i.e.,
$$
\lat_x^2 + \lon_x^2 \sin^2 \lat = z, \quad
\lat_x \lat_y + \lon_x \lon_y \sin^2 \lat = 0, \quad
\lat_y^2 + \lon_y^2 \sin^2 \lat = 1/z.
$$
Eliminating $z$, we have 
$$
\numbered\label{Lipsch1}
(\lat_x^2 + \lon_x^2 \sin^2 \lat)(\lat_y^2 + \lon_y^2 \sin^2 \lat) = 1, \quad 
\lat_x \lat_y + \lon_x \lon_y \sin^2 \lat = 0.
$$
Since
$$
\mathbf{n}_\lat = (\cos\lon \cos\lat, \sin\lon \cos\lat, -\sin\lat), \quad
\mathbf{n}_\lon = (-\sin\lon \sin\lat, \cos\lon \sin\lat, 0), \\
\mathbf{n}_x = (\lat_x \cos\lon \cos\lat - \lon_x \sin\lon \sin\lat,
  \lat_x \sin\lon \cos\lat + \lon_x \cos\lon \sin\lat, -\lat_x \sin\lat),
$$
we easily compute 
$$
\numbered\label{Lipsch co}
\cos\omega = \frac{\mathbf{n}_\lat \cdot \mathbf{n}_x}{|\mathbf{n}_\lat|\,|\mathbf{n}_x|}
 = \frac{\lat_x}{\sqrt{\lat_x^2 + \lon_x^2 \sin^2 \lat}}.
$$
Honce, the Lipschitz' condition amounts to $\lon_x/\lat_x$ being a 
function of~$\lat$ alone. This can be conveniently written in the form
$$
\numbered\label{Lipsch2}
\lon_x = \frac{\Theta(\lat)}{\sin\lat} \lat_x
$$ 
(intentionally leaving $\sin\lat$ unabsorbed).
Conditions~\eqref{Lipsch1} and~\eqref{Lipsch2} combine into
the system 
$$
\numbered\label{Lipsch3}
\lon_x = \frac{\Theta}{\sin\lat} \lat_x, \quad
\lon_y = -\frac{1}{(\sin\lat)(\Theta^2 + 1) \lat_x}, \quad
\lat_y = \frac{\Theta}{(\Theta^2 + 1) \lat_x}.
$$
Computing the compatibility conditions for $\lon$ we get
$(\sin\lat) \lat_{xx} + (\cos\lat) \lat_x^2 = 0$
with the general solution
$$
\lat = \arccos h, \quad h = h_1 x + h_0,
$$
$h_1,h_0$ being arbitrary functions of $y$.
Now the third equation of~\eqref{Lipsch3} implies
$$
\numbered\label{Lipsch Th}
\frac{\Theta}{\Theta^2 + 1} = \lat_x \lat_y
 = \frac{h_1}{1 + (h_1 x + h_0)^2}
   (\frac{\partial h_1}{\partial y} x + \frac{\partial h_0}{\partial y}).
$$
Since the left-hand side is a function of $\lat$ alone,
the same is true for the right-hand side.
Checking the Jacobian, we get
$$
\numbered\label{Lipsch Jac}
0 = \frac{h_1^2}{(1 + (h_1 x + h_0)^2)^{3/2}}
 (\frac{\partial h_1^2}{\partial y^2} x + \frac{\partial h_0^2}{\partial y^2}).
$$
If $h_1 = 0$, then $h = h_0$ is a function of $y$ alone and from 
system~\eqref{Lipsch3} we easily obtain $h_0 = h_{01} y + h_{00}$ and
$z = 1/h_{01}^2 - (y + h_{00}/h_{01})^2$, which is the particular solution.

Otherwise $h_1 \ne 0$; it follows from \eqref{Lipsch Jac} that $h_1,h_0$ are 
linear in $y$ and we can write
$$
\numbered\label{Lipsch h}
h = h_{11} x y  + h_{10} x + h_{01} y + h_{00},
$$ 
with $h_{ij}$ being arbitrary constants.
Now, eq.~\eqref{Lipsch Th} gives
$$
\frac{\Theta}{\Theta^2 + 1}
 = \frac{H_1 h - H_2}{1 - h^2},
\quad\text{where}\quad
H_1 = h_{11}, \quad H_2 = h_{11} h_{00} - h_{10} h_{01}.
$$
Hence,
$$
\Theta = \frac{1 - h^2 + \sqrt{(1 - h^2)^2 - 4 (H_1 h - H_2)^2}}{2 (H_1 h - H_2)}.
$$
Since $H_1,H_2$ are constants, $\Theta$ is a function of~$h$ and, consequently, of~$\lat$
alone.
Inserting into system~\eqref{Lipsch3} and then into $z = \lat_x^2 + \lon_x^2 \sin^2 \lat$, 
we obtain the general solution~\eqref{Lipsch z}.
\end{proof}

It is now easy to obtain the corresponding orthogonal equiareal pattern.

\begin{theorem}
The orthogonal equiareal pattern corresponding to the general Lipschitz 
solution is $\mathbf{n} = (\cos\lon \sin\lat, \sin\lon \sin\lat, \cos\lat)$, where
$$
\numbered\label{Lipsch oep}
\lat = \arccos h, \\
\lon = - \frac{\ln(h_{11}x + h_{01})}{h_{11}}
 + \int \frac{1 - h^2 + \sqrt{(1 - h^2)^2 - 4 (H_1 h - H_2)^2}}
        {2(H_1 h - H_2)(1 - h^2)} \dif h,  \\
h = h_{11} x y  + h_{10} x + h_{01} y + h_{00}, 
\quad H_1 = h_{11}, \quad H_2 = h_{11} h_{00} - h_{10} h_{01}.
$$ 
\end{theorem}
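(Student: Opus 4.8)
The plan is to notice that the proof of the previous theorem has already carried out all the analysis except for one final quadrature. That proof produced $\lat = \arccos h$ with $h = h_{11}xy + h_{10}x + h_{01}y + h_{00}$, and expressed $\Theta$ through $h$ as $\Theta = (1 - h^2 + \sqrt{(1-h^2)^2 - 4(H_1 h - H_2)^2})/(2(H_1 h - H_2))$; constructing the orthogonal equiareal pattern therefore reduces to integrating the overdetermined system~\eqref{Lipsch3} for the one remaining function $\lon$. Substituting $\sin\lat = \sqrt{1-h^2}$ and $\lat_x = -h_x/\sqrt{1-h^2}$ into the $\lon$-equations of~\eqref{Lipsch3}, namely $\lon_x = (\Theta/\sin\lat)\lat_x$ and $\lon_y = -1/((\sin\lat)(\Theta^2+1)\lat_x)$, turns them into $\lon_x = -\Theta h_x/(1-h^2)$ and $\lon_y = h_y/(\Theta(1-h^2))$, which are rational-algebraic in $h$, $h_x$, $h_y$.

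The key simplification — and the one genuinely non-routine observation — is an algebraic identity the bilinear form of $h$ forces: a direct expansion gives $h_x h_y = h_{11}h - (h_{11}h_{00}-h_{10}h_{01}) = H_1 h - H_2$, which ties the two first partials of $h$ together through $h$ itself. I would then pass from the coordinates $(x,y)$ to the chart $(x,h)$, legitimate wherever $h_y = h_{11}x + h_{01}$ does not vanish — precisely the region on which the asserted formula is meaningful — so that, as differential operators, $\partial_y|_x = h_y\,\partial_h|_x$ and $\partial_x|_y = \partial_x|_h + h_x\,\partial_h|_x$. Under this change $\lon_h|_x = \lon_y/h_y = 1/(\Theta(1-h^2))$ is a function of $h$ alone, while $\lon_x|_h = \lon_x|_y - h_x\lon_h|_x = -\frac{h_x}{1-h^2}(\Theta + 1/\Theta) = -\frac{h_x(\Theta^2+1)}{\Theta(1-h^2)}$ collapses — first by the relation $\Theta/(\Theta^2+1) = (H_1 h - H_2)/(1-h^2)$ established in the previous proof, then by $h_x h_y = H_1 h - H_2$ — to the elementary $-h_x/(H_1 h - H_2) = -1/h_y = -1/(h_{11}x+h_{01})$.

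The two integrations then assemble into the stated formula for $\lon$: integrating $\lon_x|_h = -1/(h_{11}x + h_{01})$ with respect to $x$ at fixed $h$ contributes the term $-\ln(h_{11}x + h_{01})/h_{11}$, and integrating $\lon_h|_x = 1/(\Theta(1-h^2))$ with respect to $h$ — which, after inserting the explicit $\Theta$ and rationalising its reciprocal by means of $(1-h^2+\sqrt{D})(1-h^2-\sqrt{D}) = 4(H_1 h - H_2)^2$ (with $D = (1-h^2)^2 - 4(H_1 h - H_2)^2$), is an explicit elementary-plus-radical function of $h$ — contributes the remaining quadrature term, pinning $\lon$ down up to an additive constant. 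For these $\lat$ and $\lon$ the map $\mathbf n = (\cos\lon\sin\lat, \sin\lon\sin\lat, \cos\lat)$ has, by construction, first fundamental form $z\,\dif x^2 + \dif y^2/z$ with $z$ the Lipschitz solution~\eqref{Lipsch z}, which is exactly the assertion that it is the orthogonal equiareal pattern corresponding to that solution.

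I expect the friction to be entirely in bookkeeping rather than in any idea: spotting $h_x h_y = H_1 h - H_2$, choosing the $(x,h)$ chart so the $x$-integration becomes elementary, and keeping careful track of the branch of $\sqrt{D}$ and of the sign of $\Theta$ so the quadrature term emerges in exactly the displayed closed form. As in the previous theorem, the loci where $h_y$ vanishes identically, and the case $h_{11}=0$ yielding the excluded particular solution $z = 1/(c_1-(x-c_0)^2)$, are degenerate and already handled there; no separate discussion is needed here.
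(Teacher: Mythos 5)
Your proposal is correct and is essentially the paper's own proof in different clothing: the paper observes the relation $\lon_x h_y - \lon_y h_x + 1 = 0$ and solves it by characteristics to get $\lon = -\ln(h_{11}x+h_{01})/h_{11} + \Phi(h)$, and your passage to the $(x,h)$ chart (legitimised by the same identity $h_x h_y = H_1 h - H_2$) is precisely that method of characteristics, with $\lon_x|_h = -1/h_y$ and $\lon_h|_x = \Phi'(h)$ playing the two roles. One substantive point you defer to ``bookkeeping'' deserves to be stated outright: your computation gives $\Phi'(h) = 1/(\Theta(1-h^2))$, and rationalising $1/\Theta$ via $(1-h^2+\sqrt{D})(1-h^2-\sqrt{D}) = 4(H_1h-H_2)^2$ produces the integrand with $1-h^2-\sqrt{D}$ in the numerator, i.e.\ the reciprocal root of the quadratic for $\Theta$, whereas \eqref{Lipsch oep} displays $1-h^2+\sqrt{D}$; since the branch of $\Theta$ is already pinned down by the formula \eqref{Lipsch z} for $z$, this is a genuine sign discrepancy with the displayed integrand (one that appears to originate in the paper's own statement rather than in your argument), not a detail that resolves itself.
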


\begin{proof}
We need to know $\lon$, i.e., we have to integrate the first two 
equations~\eqref{Lipsch3}. 
It is easily observed that
$\lon_x h_y - \lon_y h_x + 1 = 0$. Solving this PDE for $\lon$, we obtain
$$
\lon = - \frac{\ln(h_{11}x + h_{01})}{h_{11}} + \Phi(h),
$$
while for $\Phi(h)$ we get
$$
\frac{\dif\Phi}{\dif h} = \frac{1 - h^2 + \sqrt{(1 - h^2)^2 - 4 (H_1 h - H_2)^2}}{2(H_1 h - H_2)(1 - h^2)}.
$$
\end{proof}

The Stellungswinkel $\omega$ is a function of the lattitude
$\lat$ as required; namely
$$
\cos^2\omega = \frac{1}{\Theta^2 + 1}
 = \frac{1 - h^2 + \sqrt{(1 - h^2)^2 - 4 (H_1 h - H_2)^2}}{2(1 - h^2)}
\\\quad
 = \frac{\sin^2\theta + \sqrt{\sin^4\theta - 4 (H_1 \cos\theta - H_2)^2}}
 {2\sin^2\theta}.
$$

\begin{remark} \rm
It is easy to check that the general Lipschitz solution~\eqref{Lipsch z} 
satisfies 
$$
h_{11} \scaling + h_{01} \xtranslation - h_{10} \ytranslation = 0,
$$
where $\xtranslation = z_x$, 
$\ytranslation = z_y$, 
$\scaling = x z_x - y z_y + 2 z$
are generators (see, e.g.,~\cite{B-V-V}) of the Lie symmetries 
$\xTranslation$, 
$\yTranslation$, 
$\Scaling$, respectively.
This means that~\eqref{Lipsch z} is a symmetry-invariant solution of the 
constant astigmatism equation.
\end{remark}

\begin{example}\rm
When the integral in~\eqref{Lipsch oep} can be 
expressed in terms of elementary functions? Assuming that $h_{11}$ is nonzero, 
$h_{10}$ and $h_{01}$ can be removed by shifts, so we set $h_{10} = h_{01} = 0$. 
Consider the expression under the square root in \eqref{Lipsch oep}. 
Its discriminant with respect to $h$ is proportional to
$$
(1 + H_1^2 + 2 H_2) (1 + H_1^2 - 2 H_2) (H_1 - H_2)^2 (H_1 + H_2)^2
\\\quad
 = h_{11}^4 (1 + h_{11}^2 + 2 h_{11} h_{00}) (1 + h_{11}^2 - 2 h_{11} h_{00})
   (1 - h_{00})^2 (1 + h_{00})^2,
$$
which is zero if and only if 
$$
h_{00} = \pm 1 \quad \text{or}\quad h_{00} = \pm \frac{1 + h_{11}^2}{2h_{11}}.
$$ 
In these cases, $\phi$ can be expressed in terms of elementary functions. For $h_{00} = \pm 1$
we have
$$
\phi = 
\mp\frac{\sqrt{1-C^2}}{2C} \ln\left(  \frac{  4\left( 1-2 C^2 \pm h + \sqrt{1-C^2}\sqrt{(h\mp 1)^2-4(C^2 \mp h)}\right)}{h\mp 1}   \right) 
\\\quad
-\frac{\ln(Cx)}{C}
+ \frac{\ln[h^2 - 1+ (h\mp 1)\sqrt{(h\mp 1)^2-4(C^2 \mp h)}]}{2C}   
\\\quad
\pm \frac{1}{2}\arctan\left( \frac{\sqrt{(h\pm 1)^2-4C^2}}{2C} \right) \,, 
$$ 
where $C = H_1 = h_{11}$ is a constant and $h = Cxy + 1$. 
The 
orthogonal equiareal pattern corresponding to $h_{00} = 1$ and
$h_{11} = 1/4$ can be seen on Fig.~\ref{fig:Lipsch}. 

In the second case, when $h_{00} = \pm (1 + h_{11}^2)/2h_{11}$, we obtain
$$
\phi = -\frac{\ln(Cx)}{C} \mp\frac{\ln\left(h\pm C + \sqrt{h^2 \pm 2Ch -C^2 -2}\right)}{2C} 
\\\quad
- \frac{1}{2C} \ln(	\frac{-C^3-3C  \pm h( 3C^2 + 1) 
 - (1-C^2)\sqrt{h^2 \pm 2Ch -C^2 -2} }
{(C^2 \mp 2Ch + 1)^2})	
\\\quad
\pm \frac{1}{2}\arctan(\frac{2(C h\pm 1)\sqrt{h^2 \pm 2Ch -C^2 -2}}{h^2-C^2h^2-C^2\pm 4Ch-3}).	
$$ 
Here $h,C$ have the same meaning as above.
No figure is provided in this case, since $\phi$ and $\theta$ cannot be simultaneously real.
\end{example}

\begin{figure}
\begin{center}
\includegraphics[scale=0.35]{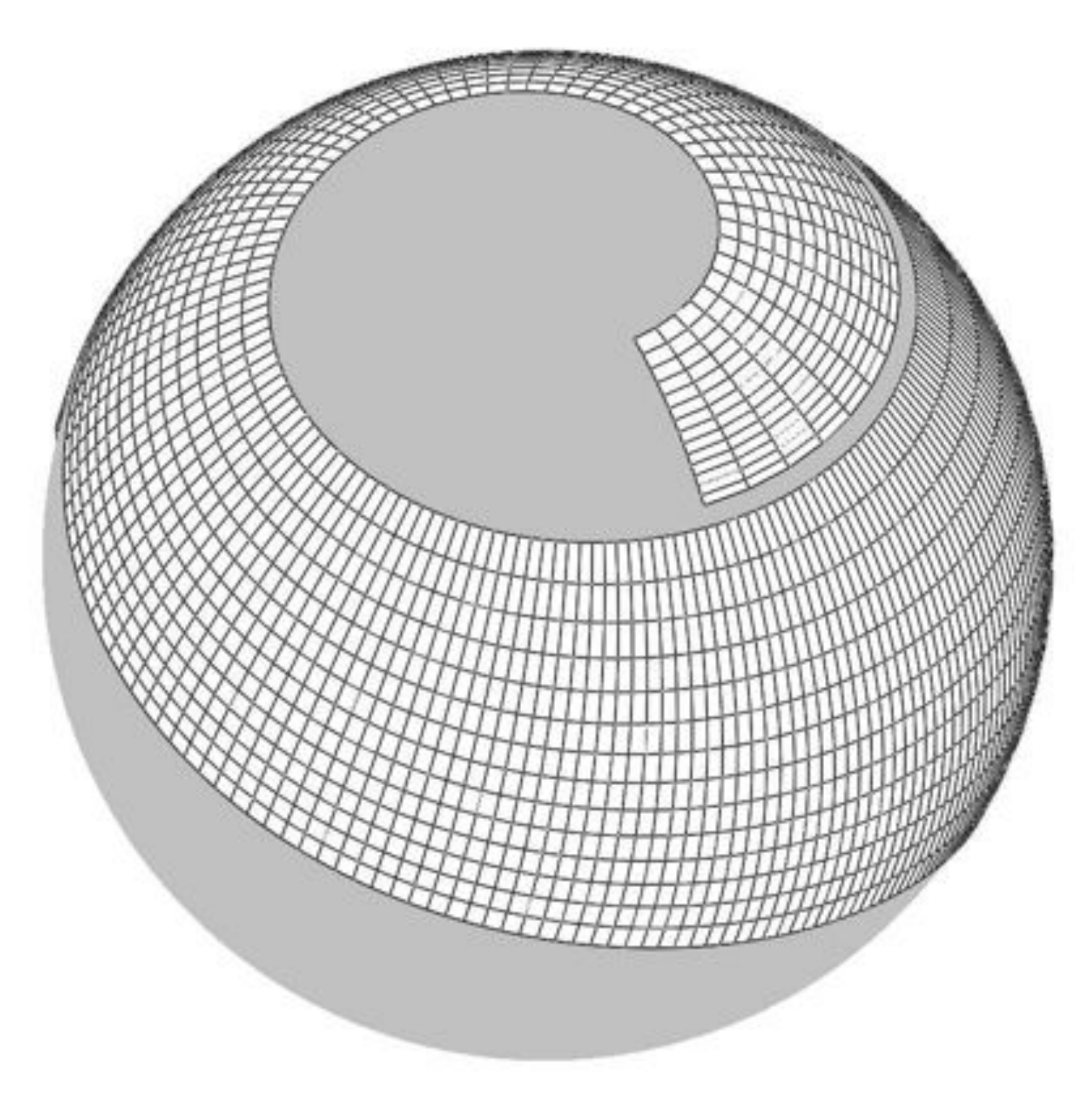}
\caption{The orthogonal equiareal pattern on the sphere corresponding to one 
of the Lipschitz solutions}
\label{fig:Lipsch}
\end{center}
\end{figure}

\section{Conclusions}

Summarizing, we identified the constant astigmatism equation and the sine-Gordon 
equation as integrable models in two-dimensional plasticity on the sphere with 
respect to parameterisation by principal stress lines and slip lines, 
respectively.
We remark in this context that the majority of exact solutions in plasticity 
that can be found in the literature are either due to linearisable systems or 
come from symmetry methods; see, e.g.,~\cite{A-B-S,Lam} and references therein.

We have also extended the classical Bianchi superposition principle so as to
be able to generate solutions of the constant astigmatism equation by 
algebraic manipulations. 
Finally, revisiting the classical Lipschitz surfaces of constant astigmatism,
we have identified them as corresponding to invariant solutions.

In conclusion, we are able to say that obtaining large families of exact 
solutions of the constant astigmatism equation as well as interpreting them as 
plastic flows on a sphere is merely a matter of routine.
Since computations quickly leave the realm of elementary functions,
the examples and illustrations scattered throughout this paper are only the
simplest ones. More are to follow in a subsequent paper.

\ack

The first named author was supported by Specific Research grant SGS/01/2011 of
the Silesian University in Opava. The second-named 
author was supported by GA\v{C}R (project P201/11/0356) and
by RVO institutional funding (I\v{C} 47813059).
We thank E.~Ferapontov, J.~Kluso\v{n}, I.S.~Krasil'shchik, and S.I.~Senashov for advice and valuable discussions. Section~\ref{sect:BTSP} was written at the kind suggestion of an anonymous referee of our previous work.

\section*{References}

\end{document}